\newcommand{\remove}[1]{}
\newtheorem{theorem}{Theorem}
\newtheorem{lemma}{Lemma}
\newtheorem{claim}{Claim}
\newtheorem{corollary}{Corollary}
\newcommand{\facilityset}{\mathcal{F}}	
\newcommand{\clientset}{\mathcal{C}}			
\newcommand{\capacityU}{\mathcal{U}}				
\newcommand{\dist}[2]{c(#1,~#2)}
\newcommand{\etal}{et al.}
\newcommand{\B}{\mathcal{L}}
\newcommand{\opt}[1]{LP_{opt}}
\newcommand{\sstar}[1]{\mathcal{S}_{#1}}
\newcommand{\etainv}[1]{ \eta^{-1}(#1)}
\newcommand{\res}[1]{reserved(#1)}
\newcommand{\sigmatwoinv}[1]{\sigma^{-1}_2(#1)}
\newcommand{\sigmaoneinv}[1]{\sigma^{-1}_1(#1)}
\newcommand{\sigmaprimeoneinv}[1]{\hat{\sigma}^{-1}_1(#1)}
\newcommand{\sigmaprimeone}{\hat{\sigma}_1}
\newcommand{\settled}{\clientset_s}
\newcommand{\unsettled}{\clientset_u}
\date{}
\begin{document}

\title{Constant factor approximations for Lower and Upper bounded Clusterings}
\maketitle          
	\begin{center}
		\author{Neelima Gupta$^1$, }
		\author{Sapna Grover$^2$ and}
		\author{Rajni Dabas$^3$  }\\
		Department of Computer Science, University of Delhi, India.\\
	\end{center}
	\begin{enumerate}
		\item {	\texttt{ngupta@cs.du.ac.in}} 
		\item {\texttt{sgrover@cs.du.ac.in, sapna.grover5@gmail.com}}
		\item {	\texttt{rajni@cs.du.ac.in}}
	\end{enumerate}

\begin{abstract}

Clustering is one of the most fundamental problem in Machine Learning. Researchers in the field often require a lower bound on the size
of the clusters to maintain anonymity and upper bound for the ease of analysis. Specifying an optimal cluster size is a problem often faced by scientists.  
In this paper, we present a framework to obtain constant factor approximations for some prominent clustering objectives, with lower and upper bounds on cluster size.
This enables scientists to give an approximate cluster size by specifying the lower and the upper bounds for it. Our results preserve the lower bounds but may violate the upper bound a little.

We also reduce the violation in upper bounds for a special case when the gap between the lower and upper bounds is not too small.\\

\textbf{Keywords:}  Facility Location, $k$-Median, $k$-Center, Lower Bound, Upper Bound, Approximation.

\end{abstract}

\section{Introduction}

{\em Clustering} is one of the most fundamental problem in Machine Learning: given a set of data points, we wish to group them into clusters so that the points within a cluster are most similar to each other and those in different clusters are most dissimilar. Various objective functions have been used in literature to capture the notion of similarity. Three objectives that have been extensively studied in combinatorial optmization are captured in  $k$-Center (kC)/$k$-Median (kM) and facility location (FL) problems. In the $k$-Center (/$k$-Median) problem, we wish to identify a set of $k$ centers and assign points to it so as to minimise the maximum distance(dissimilarity) of any point from its assigned center(/minimise the sum of distances of the points from the assigned center). In FL, instead of a bound on the number of centers to open, we have center(/facility) opening cost and we wish to form clusters so that sum of facility opening cost and the sum of distances of the points(/clients) from the assigned center is minimised. 
$k$-Facility Location problem (kFL) is a common generalization of kM and FL, where-in we have a bound $k$ on the maximum number of facilities to be opened and facility opening costs associated with each facility. And, the objective now is to minimize the total cost of opening the selected facilities and assigning clients to them. 

The problems are well known to be NP-hard. Several approximation results have been developed for these problems in the basic version. 
However, constraints occur naturally in clustering.  Adding constraints makes the problem harder. 
One such constraint requires every center to have a minimum number of data points assigned to it. An application requiring lower bound on the cluster size is data privacy~\cite{Samarati2001} that demands every point in the cluster to be "alike" and indistinguishable from each other. In the FL inspired clustering, lower bounds are required to ensure profitability.

Another constraint imposes an upper bound on the maximum number of data points that can be assigned to a center. In the FL inspired clustering, capacities occur naturally on the facilities. In case of cluster analysis, researchers often do not want the clusters to be too big for the ease of analysis. 

In this paper, we study the problems with both the lower as well as the upper bounds  on the minimum and the maximum number, respectively, of data points that are assigned to a center. 
Researchers often face the problem of determining the appropriate size of clusters and, several heuristics are used to estimate the same~\cite{Jain2010,GomezGN12}. In this paper, we handle this situation somewhat giving them an opportunity to give a rough estimate of the size in terms of lower and upper bound on the size. Heuristics provide no performance guarantee whereas our solution generates clusters that respect the lower bounds and violate the upper bounds a little. In FL inspired clustering, limitation of capacities and the requirement of scale for profitability occur together in several applications like market-place, transportation problem etc. We present our results when one of the bounds is uniform.  As upper bound has been popularly called as capacity in the literature, we will use the two terms inter-changeably in the paper.
In the $k$-center problem, the set of data points is same as the set of centers i.e., the centers can be opened at the data points. $k$- supplier (kS) is a generalization of the $k$- center problem in which the two sets are different. Our results are applicable to the $k$-supplier problem.



\begin{theorem} \label{LBUBkFL1Theorem1}
Given an $\alpha$-approximation for Upper bounded $k$-Median problem (UkM)/$k$-Facility Location problem (UkFL)/Facility Location problem (UFL)/$k$-Supplier problem (UkS)/$k$-Center problem (UkC) violating the upper bound by a factor of $\beta$ and a $\gamma$-approximation for uniform Lower bounded Facility Location problem (LFL)/$k$-Supplier problem (LkS)/$k$-Center problem (LkC), an $O(\alpha + \gamma)$-approximation can be obtained for Lower and  Upper bounded $k$-Median problem (LUkM)/$k$-Facility Location problem (LUkFL)/Facility Location problem (LUFL)/$k$-Supplier problem (LUkS)/$k$-Center problem (LUkC) with uniform lower bounds in polynomial time that violates the upper bound by a factor of $\beta + 1$.
\end{theorem}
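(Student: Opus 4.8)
The plan is to treat the two hypothesized algorithms as independent oracles, invoke each once, and then stitch their outputs together so that the lower-bounded solution certifies feasibility of the lower bound while the upper-bounded solution controls both the capacity and the number of open centers.

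First I would establish the two benchmark inequalities. Let $OPT$ denote the optimal cost of the Lower-and-Upper-bounded instance. Dropping the lower bound leaves a pure upper-bounded instance for which the optimal LU-solution is feasible, so $OPT_U \le OPT$; symmetrically, dropping the capacity leaves a uniform lower-bounded instance for which the LU-solution is feasible, so $OPT_L \le OPT$. Invoking the $\alpha$-approximation and the $\gamma$-approximation therefore yields a solution $S_U$ of cost at most $\alpha\,OPT$ that opens at most $k$ centers and respects capacity up to $\beta u$, and a solution $S_L$ of cost at most $\gamma\,OPT$ whose every open center serves at least $L$ clients (here $L$ is the uniform lower bound and $u$ the capacity, with $L \le u$ as required for feasibility).

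Next I would use $S_U$ as the skeleton of the final solution, since it already guarantees the two constraints that cannot be repaired for free: the cardinality bound $k$ and a capacity violation of only $\beta$. I split the centers of $S_U$ into those already meeting the lower bound (kept untouched) and the deficient ones (fewer than $L$, hence fewer than $u$, clients). The role of $S_L$ is to act as a template telling me how the clients of deficient centers can be regrouped into bundles of size at least $L$ at small connection cost: each cluster of $S_L$ is a witness that at least $L$ clients lie close together. Guided by this template I would reroute the deficient clients so that every surviving open center ends up with at least $L$ clients, arranging the rerouting as a matching in which each retained center absorbs the clients of at most one deficient group, that is, at most $u$ extra clients. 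This keeps the open centers a subset of those of $S_U$ (so at most $k$), raises every load from at most $\beta u$ to at most $\beta u + u = (\beta+1)u$, and makes the lower bound hold exactly at every open center. For the cost, I would bound each rerouted client $j$ by the triangle inequality: the detour from its $S_U$-center to its new center is paid for by the length of the $S_L$-cluster it belongs to, so the total rerouting cost charges to the cost of $S_L$, yielding a final cost of $O(\alpha+\gamma)\,OPT$, while the clients left in place contribute at most the cost of $S_U$.

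The hard part will be the combination itself: designing the rerouting between the two clusterings --- which live on different facility sets --- so that it meets the cardinality bound $k$, attains the exact uniform lower bound $L$ at every open center, and caps the capacity overflow at precisely one extra $u$, all simultaneously. In particular, when $S_U$ consists almost entirely of deficient centers, one must argue, using the at-least-$L$ guarantee of $S_L$, that the deficient clients can always be consolidated onto enough surviving centers without any of them exceeding $(\beta+1)u$, and that the consolidation cost telescopes into $\gamma\,OPT$ through the triangle inequality. Getting these three targets to hold at once, rather than trading one off against another, is where the real work lies.
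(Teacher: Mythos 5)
Your setup — the two relaxed instances, the benchmark inequalities $OPT_U \le OPT$ and $OPT_L \le OPT$, and the target of stitching the two approximate solutions together — matches the paper exactly. But the combination step, which is the entire technical content of the theorem, is left as an acknowledged open problem in your proposal ("where the real work lies"), and the specific plan you sketch for it would not go through. First, you insist that the final open centers be a subset of the centers of $S_U$, with $S_L$ used only as a "template" certifying that $L$ clients lie close together. The difficulty is that topping up a deficient $S_U$-center requires physically moving clients that are currently assigned to \emph{other} $S_U$-centers, which can push those centers below $L$ in turn; resolving this cascade is precisely what your matching idea does not address (a matching pairs each deficient group with one retained center, but when almost all centers are deficient there are not enough retained centers, and merging deficient groups pairwise need not reach $L$ since each has fewer than $L$ clients). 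The paper's resolution is structurally different: it maps each facility of $\mathcal{F}_2$ to its nearest facility in $\mathcal{F}_1$, forming stars; within a star it sorts the leaves by decreasing distance from the center, accumulates their clients in a bag, and opens a leaf whenever the bag reaches its lower bound; and when the last leaf's bag falls short, it tops it up with clients \emph{reserved from the star center's own $AS_1$-cluster} and opens the star center itself (a facility of $\mathcal{F}_1$, not of $S_U$). Opening the $\mathcal{F}_1$-center is forced, because the reserved clients' detour can only be charged against their $AS_1$ service cost; sending them to an $\mathcal{F}_2$-facility has no cost bound.

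Second, even granting the star structure, there is a further obstruction your proposal never touches: the reserved clients of a star center $i$ may already have been consumed while processing an earlier star (they could be assigned in $AS_2$ to leaves of that earlier star). The paper handles this with a weighted dependency digraph on stars, a cycle-breaking reassignment of $\sigma_1$ that preserves every $|\sigma_1^{-1}(i)| \ge \B_i$, and a topological processing order of the resulting almost-DAG; without some such device the lower-bound guarantee at the last facility of each star simply fails. Your cost accounting is also slightly off in spirit: the rerouting cost does not charge only to $S_L$ — the triangle-inequality chain passes through both $\sigma_2(j)$ and $\sigma_1(j)$ (via the fact that the star center is the nearest $\mathcal{F}_1$-facility to $\sigma_2(j)$), so each client's final cost is bounded by a combination such as $2\,c(j,\sigma_1(j)) + 7\,c(j,\sigma_2(j))$, charging to both solutions. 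The cardinality and $(\beta+1)$ capacity bounds then come from the per-star accounting (at most $|\eta^{-1}(i)|$ facilities opened per star; a bag holds fewer than $\B$ clients before absorbing one more leaf's at most $\beta u$ clients), not from a global matching. In short: the decomposition is right, but the proof of the theorem is the combination algorithm, and that is missing.
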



\begin{theorem} \label{LBUBkFLTheorem}
Given an $\alpha$-approximation for uniform Upper bounded $k$-Median problem (UkM)/$k$-Facility Location problem (UkFL)/Facility Location problem (UFL)/$k$-Supplier problem (UkS)/$k$-Center problem (UkC) violating the upper bound by a factor of $\beta$ and a $\gamma$-approximation for Lower bounded Facility Location problem (LFL)/$k$-Supplier problem (LkS)/$k$-Center problem (LkC), an  $O(\alpha + \gamma)$-approximation can be obtained for Lower and Upper bounded $k$-Median problem (LUkM)/$k$-Facility Location problem (LUkFL)/Facility Location problem (LUFL)/$k$-Supplier problem (LUkS)/$k$-Center problem (LUkC) with uniform upper bounds in polynomial time that violates the upper bound by a factor of $\beta + 1$.
\end{theorem}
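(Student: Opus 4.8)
The plan is to treat the two given algorithms as black boxes, run both on the same metric, and glue their outputs: the upper-bounded solution will control the capacity (and, for the $k$-constrained variants, the number of opened centers), while the lower-bounded solution will certify that every retained center can be topped up to its lower bound cheaply. Let $OPT$ denote an optimal solution of the combined instance, with value $\mathrm{opt}$. The first and cleanest step is a relaxation argument: $OPT$ is simultaneously feasible for each single-bounded problem, since dropping the lower bound leaves it feasible for the uniform upper-bounded problem and dropping the upper bound leaves it feasible for the lower-bounded problem. Hence each relaxation has optimum at most $\mathrm{opt}$, so the $\alpha$-oracle returns a solution $\mathcal{S}_U$ (centers $\mathcal{F}_U$, assignment $\sigma_U$) with $cost(\mathcal{S}_U)\le\alpha\cdot\mathrm{opt}$ in which every open center serves at most $\beta u$ clients, and the $\gamma$-oracle returns $\mathcal{S}_L$ (centers $\mathcal{F}_L$, assignment $\sigma_L$) with $cost(\mathcal{S}_L)\le\gamma\cdot\mathrm{opt}$ in which every open center meets its lower bound.

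The combination is built on $\mathcal{S}_U$. I would first split $\mathcal{F}_U$ into \textbf{dense} centers, which already serve at least their lower bound, and \textbf{sparse} centers, which serve fewer than $L\le u$ clients and therefore violate only the lower bound. Dense centers are kept as is, satisfying both bounds with load at most $\beta u$. The sparse clients are repaired using $\mathcal{S}_L$: each $\mathcal{S}_L$-cluster pools at least $L$ clients at total cost $\gamma\cdot\mathrm{opt}$, so it serves as a witness that sparse clients can be consolidated into groups of size at least $L$ cheaply. Concretely, I would partition the sparse clients into groups, each hosted at a retained center (a nearby dense center, or one retained sparse center acting as host), so that each host absorbs at most one extra batch of at most $u$ clients. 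Since the host's own load is at most $\beta u$, its final load is at most $\beta u+u=(\beta+1)u$, giving exactly the claimed violation factor; because $L\le u$ in any feasible combined instance, a batch of size at most $u$ suffices to push every host to at least $L$, so the lower bound is met as well. Keeping only a subset of $\mathcal{F}_U$ preserves the bound $k$ for the $k$-median, $k$-FL, and $k$-center variants, since $\mathcal{S}_U$ already respects $k$.

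For the cost I would charge the re-routing distances through the triangle inequality: a sparse client moved from its $\sigma_U$-center to its new host pays its original $\sigma_U$-distance plus a detour that telescopes through the $\mathcal{S}_L$-center owning it, and the detour is bounded by $\mathcal{S}_L$-connection costs. Summing the retained $\sigma_U$-costs and the detours gives total connection (and facility/opening) cost $O(\alpha+\gamma)\cdot\mathrm{opt}$. The identical argument handles the $k$-Center and $k$-Supplier variants, where the objective is the maximum rather than the sum: every reassignment distance is a constant number of hops along $\mathcal{S}_U$- and $\mathcal{S}_L$-edges, so the covering radius grows by only a constant factor, again $O(\alpha+\gamma)$ times $\mathrm{opt}$.

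The main obstacle is precisely the re-routing/consolidation step, where three requirements must hold simultaneously: every retained center ends with at least its lower bound, no host absorbs more than $u$ extra clients (so the $+1$ factor is exact), and the total detour stays $O(\gamma)\cdot\mathrm{opt}$. I expect this to reduce to a matching/charging argument between the sparse centers of $\mathcal{S}_U$ and the clusters of $\mathcal{S}_L$, using that each $\mathcal{S}_L$-cluster both certifies a size-at-least-$L$ grouping and localizes the sparse clients it owns so that shipping them to one representative is cheap; the delicate point is choosing these representatives disjointly enough that the per-host extra load never exceeds $u$. Verifying that this construction degrades gracefully to \emph{non-uniform} lower bounds (the setting here) rather than the uniform lower bounds of Theorem~\ref{LBUBkFL1Theorem1} is where the real work lies, and the feasibility condition $L\le u$ is what ultimately makes the $(\beta+1)$ bound attainable.
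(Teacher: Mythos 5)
Your opening relaxation argument and the target arithmetic ($\beta u + u = (\beta+1)u$ from letting each surviving center absorb one extra batch) match the paper, but the heart of the theorem is the consolidation step, and that is exactly the part you leave as a plan rather than a proof. Three concrete gaps remain. First, availability of the ``witness'' clients: you want each cluster of $\mathcal{S}_L$ to certify that a group of sparse clients can be topped up to the lower bound, but those same clients may already have been committed to other hosts (or be needed to keep a dense center dense), and these commitments can be circularly dependent across groups. The paper resolves precisely this by building a weighted dependency digraph $G_2$ on the stars $\sstar{i}$ (one star per facility $i$ of the lower-bounded solution, with leaves $\etainv{i}$ being the upper-bounded facilities whose nearest $\facilityset_1$-facility is $i$), rerouting clients along the minimum-weight edge of each directed cycle to obtain an almost-DAG (Algorithm~\ref{BreakingCycle}, Lemma~\ref{ASPcost}), and then processing stars in topological order with an explicitly \emph{reserved} set of $\max\{0,\B_i-|N_{y_l}|\}$ clients; your proposal has no mechanism playing this role. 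Second, the fallback host: when the pooled sparse clients of a group fall short of the (non-uniform) lower bound, you propose shipping them to ``a nearby dense center,'' but such a center need not exist (every center of $\mathcal{S}_U$ may be sparse) and ``nearby'' carries no distance guarantee. The paper instead opens the star center $i\in\facilityset_1$ itself, funds its lower bound with the reserved clients, and preserves the cardinality bound by opening at most one of $i$ and the last leaf $y_l$ per star --- so the opened set is \emph{not} a subset of $\mathcal{F}_U$, contrary to your claim that keeping a subset of $\mathcal{F}_U$ suffices for the constraint $|\facilityset'|\le k$.

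Third, the cost charging cannot be completed as stated: ``telescoping through the $\mathcal{S}_L$-center owning the client'' requires knowing which host receives which batch and why that host is close to that $\mathcal{S}_L$-center. The paper gets this from the star structure ($\eta(i')$ is the \emph{nearest} $\facilityset_1$-facility to $i'$) together with ordering the leaves of each star by decreasing distance from the center, which yields the explicit bounds $\dist{j}{\sigma_I(j)}\le 2\dist{j}{\sigma_1(j)}+7\dist{j}{\sigma_2(j)}$ for Type-I assignments and $\dist{j}{\sigma_I(j)}\le \dist{j}{\sigma_1(j)}+4\dist{j}{\sigma_2(j)}$ for Type-II assignments. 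Your dense/sparse split of $\mathcal{F}_U$ is a genuinely different decomposition from the paper's star decomposition, and it might well be made to work, but as written the proposal defers the matching/charging argument, the disjointness of representatives, and the non-uniform lower-bound case to future verification --- and those are precisely the claims that constitute the proof.
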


Several studies have been devoted to the lower bounds 
and the upper bounds 
separately for the problems. The only works that deal with both the bounds together is due to Gupta \etal~\cite{GroverGD21_LBUBFL_Cocoon} and Friggstad \etal~\cite{friggstad_et_al_LBUFL}  for FL and,  Ding \etal~\cite{DingWADS2017} and, R{\"{o}}sner and Schmidt~\cite{RosnerICALP2018} for kS and kC.
 Friggstad \etal~\cite{friggstad_et_al_LBUFL} violates  bounds on both the sides for non-uniform lower and uniform upper bounds whereas Gupta \etal~\cite{GroverGD21_LBUBFL_Cocoon} violates only the upper bound by a factor of $5/2$ when both the bounds are uniform. For LUkS/LUkC,  Ding \etal~\cite{DingWADS2017} and, R{\"{o}}sner and Schmidt~\cite{RosnerICALP2018} independently gave true constant factor approximations for uniform lower bounds and general upper bounds.

 Thus, we present first approximations for LUkM and LUkFL. For LUFL, we reduce the $5/2$ factor violation in upper bounds obtained by Gupta \etal~\cite{GroverGD21_LBUBFL_Cocoon} to $2$. Our result for LUkC and LUkS are the first results with general lower bounds and uniform upper bounds. 


\begin{table}
    \centering
    \begin{tabular}{|l|c|c|c|c|c|c|}
    \hline
          \textbf{Problem} & \textbf{$\alpha$} & \textbf{$\beta$} & \textbf{$\gamma$}  & \textbf{Factor} & \textbf{Capacity violation} \\ \hline
          
           LUkM & $O(1/\epsilon^5)$~\cite{Demirci2016} & $(1+\epsilon)$ & $82.6$~\cite{Ahmadian_LBFL} &  $O(1/\epsilon^5)$ & $(2+\epsilon)$ \\ \hline
           
            LUFL & $5$~\cite{Bansal} & $1$ & $82.6$~\cite{Ahmadian_LBFL} & O(1) & 2 \\ \hline
         
            LUkS & $3$~\cite{AnMP2015} & $1$ & $3$~\cite{Ahmadian2016}  & O(1) & 2 \\ \hline
            
            LUkC & $2$~\cite{AnMP2015} & $1$ & $2$~\cite{Aggarwal2010}  & O(1) & 2 \\ \hline
            
    \end{tabular}
    \caption{Our results for uniform lower bound.}
    \label{tab1.1}
\end{table}

\begin{table}
    \centering
    \begin{tabular}{|l|c|c|c|c|c|}
    \hline
          \textbf{Problem} & \textbf{$\alpha$} & \textbf{$\beta$} & \textbf{$\gamma$} & \textbf{Factor} & \textbf{Capacity violation} \\ \hline
          
           LUkM & $O(1/\epsilon^2)$~\cite{ByrkaRybicki2015} & $(1+\epsilon)$ & O(1)~\cite{Li_NonUnifLBFL} & $O(1/\epsilon^2)$ & $(2+\epsilon)$ \\ \hline
           
            LUkFL & $O(1/\epsilon^2)$~\cite{GroverGKP18} & $(2+\epsilon)$ & O(1)~\cite{Li_NonUnifLBFL} & $O(1/\epsilon^2)$& $(3+\epsilon)$ \\ \hline
                     
            LUFL & $3$~\cite{mathp} & $1$  & O(1)~\cite{Li_NonUnifLBFL} & O(1) & 2 \\ \hline
            
            LUkS & $3$~\cite{AnMP2015} & $1$ & $3$~\cite{Ahmadian2016}  & O(1) & 2 \\ \hline
            
            LUkC & $2$~\cite{k-centerKhuller2000} & $1$ & $3$~\cite{Ahmadian2016}  & O(1) & 2 \\ \hline

    \end{tabular}
    \caption{Our results for uniform upper bound.}
    \label{tab1}
\end{table}

Tables~\ref{tab1.1} and \ref{tab1} summarize our results after applying the current best known results for the underlying problems.
Though no results are known for UkFL with general upper bounds and general facility opening costs, to plug into the result of Theorem \ref{LBUBkFL1Theorem1}, the result will be useful in case we get one in future.
As a special case of LUkFL, we obtain first true constant factor approximation for kFL with lower bounds (LkFL). The only result known for LkFL by Han et al.~\cite{Han_LBkM} for general lower bounds, violates the lower bounds. The following corollary follows from the theorem:

\begin{corollary} \label{LBUBkFLCorollary}
There is a polynomial time algorithm that approximates Lower bounded $k$-Facility Location within a constant factor.
\end{corollary}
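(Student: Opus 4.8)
The plan is to obtain LkFL as the degenerate case of LUkFL in which the upper bound is made vacuous, and then to invoke Theorem~\ref{LBUBkFLTheorem}. Given an instance of LkFL on clients $\clientset$ and facilities $\facilityset$ with (possibly non-uniform) lower bounds on the cluster sizes and a bound $k$ on the number of facilities that may be opened, I would augment it with the uniform upper bound $\capacity = |\clientset|$. Since no facility can be assigned more than $|\clientset|$ clients in any solution, this capacity is never binding; the feasible solutions of the two instances therefore coincide, and in particular $\mathrm{OPT}_{\mathrm{LUkFL}} = \mathrm{OPT}_{\mathrm{LkFL}}$. Crucially, the lower bounds, which must be respected exactly, carry over unchanged.

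With $\capacity = |\clientset|$ uniform, Theorem~\ref{LBUBkFLTheorem} (the uniform-upper-bound version, which allows general lower bounds) is exactly the tool to use. It requires two black boxes: an $\alpha$-approximation for uniform UkFL violating the capacity by a factor $\beta$, and a $\gamma$-approximation for (non-uniform) LFL. For the first, observe that UkFL with $\capacity = |\clientset|$ is nothing but plain $k$-Facility Location with general opening costs, for which a constant-factor approximation is available; since any such solution trivially respects the capacity $|\clientset|$, we may take $\alpha = O(1)$ and $\beta = 1$. For the second, a constant-factor approximation for non-uniform LFL is known~\cite{Li_NonUnifLBFL}, so $\gamma = O(1)$.

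Feeding these into Theorem~\ref{LBUBkFLTheorem} produces, in polynomial time, a solution of cost $O(\alpha + \gamma) = O(1)$ times $\mathrm{OPT}_{\mathrm{LUkFL}} = \mathrm{OPT}_{\mathrm{LkFL}}$ that honours every lower bound and assigns at most $(\beta+1)\capacity = (\beta+1)|\clientset|$ clients to any facility. The final observation closes the argument: this ``violated'' upper bound is completely vacuous, because every solution assigns at most the total of $|\clientset|$ clients per facility regardless. Hence the output is a genuinely feasible LkFL solution with no constraint violated at all, giving the promised $O(1)$-approximation.

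The only points that genuinely need checking are that padding with $\capacity = |\clientset|$ neither alters the optimum nor introduces a spurious constraint (so that the equality $\mathrm{OPT}_{\mathrm{LUkFL}} = \mathrm{OPT}_{\mathrm{LkFL}}$ and the vacuity of the $(\beta+1)$ violation both hold), and that an off-the-shelf constant-factor approximation for plain kFL is indeed available to serve as the $\alpha$-approximation. Both are routine, which is precisely why the statement is a corollary of Theorem~\ref{LBUBkFLTheorem} rather than a theorem requiring fresh machinery.
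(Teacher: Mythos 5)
Your proposal is correct and matches the paper's intent exactly: the paper derives the corollary by viewing LkFL as LUkFL with a vacuous uniform upper bound and invoking Theorem~\ref{LBUBkFLTheorem}, so that the $(\beta+1)$-factor capacity violation is harmless while the lower bounds and the $O(\alpha+\gamma)$ guarantee carry over. Your instantiation (plain kFL as the uniform-capacity black box with $\beta=1$, and the non-uniform LFL result of~\cite{Li_NonUnifLBFL} as the $\gamma$-approximation) is a valid, indeed clean, choice of the required subroutines.
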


Next, we improve upon the violation in the upper bound for a particular scenario when the gap between the lower and the upper bounds is not too small (a reasonable scenario to occur in real applications). In particular, if $\B_i$ and $\capacityU_i$ represent the lower and the upper bounds respectively of a facility $i$, we present the following results when $2\B_{i} \leq \capacityU_i \forall i \in \facilityset$ and either of the bounds is uniform.

\begin{theorem} \label{FinalResultLBUBkFL2L11}
For $2\B_{i} \leq \capacityU_i \forall i \in \facilityset$, given an $\alpha$-approximation for 
Upper bounded $k$-Median problem (UkM)/$k$-Facility Location problem (UkFL)/Facility Location problem (UFL)/$k$-Supplier problem (UkS)/$k$-Center problem (UkC) violating the upper bound by a factor of $\beta$ and a $\gamma$-approximation for uniform Lower bounded Facility Location problem (LFL)/$k$-Supplier problem (LkS)/$k$-Center problem (LkC), an  $(O(\alpha + \gamma)/\epsilon)$-approximation can be obtained for Lower and Upper bounded $k$-Median problem (LUkM)/$k$-Facility Location problem (LUkFL)/Facility Location problem (LUFL)/$k$-Supplier problem (LUkS)/$k$-Center problem (LUkC) with uniform lower bounds in polynomial time that violates the upper bound by a factor of $\beta + \epsilon$ for a fixed $\epsilon > 0$. 
\end{theorem}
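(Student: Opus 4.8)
The plan is to build on the same two-solution scheme that underlies Theorem~\ref{LBUBkFL1Theorem1} and to sharpen only the step that inflates the capacity, using the slack guaranteed by $2L \le \capacityU_i$ (write $L$ for the uniform lower bound). First I would invoke the $\gamma$-approximation for the lower bounded problem to obtain a \emph{lower bounded backbone}: a clustering in which every cluster has at least $L$ points, at cost $O(\gamma)\cdot\mathrm{OPT}$, where $\mathrm{OPT}$ is the optimum of the lower and upper bounded instance (any feasible solution of it is feasible for the lower bounded relaxation, whose optimum is therefore at most $\mathrm{OPT}$). In parallel I would invoke the $\alpha$-approximation for the upper bounded problem to obtain an \emph{upper bounded backbone} whose open facilities carry load at most $\beta\,\capacityU_i$, at cost $O(\alpha)\cdot\mathrm{OPT}$. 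The final solution opens no facility outside the upper bounded backbone---so the cardinality/budget constraint of $k$-median, $k$-facility location, $k$-supplier and $k$-center is inherited unchanged---and only reassigns points to repair those facilities whose load has fallen below $L$.

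Everything hinges on the repair step. In Theorem~\ref{LBUBkFL1Theorem1} a deficient facility is made feasible by absorbing, along the lower bounded backbone, an entire lower bounded block of points; since such a block can be as large as a full capacity, one facility may gain up to $\capacityU_i$ extra points, and this is exactly what produces the $\beta+1$ violation. Under $2L \le \capacityU_i$ I would instead proceed as follows. Using the lower bounded backbone I certify, at cost $O(\gamma)\cdot\mathrm{OPT}$, that the points at deficient facilities can be consolidated into clusters of size at least $L$; the slack $2L \le \capacityU_i$ lets me keep every such cluster strictly below a single capacity, so it could in principle be placed on one facility. Rather than realising a consolidation by moving a whole cluster onto one recipient (the move that costs the $+1$), I cut the relocated demand into fragments of size at most $\epsilon\,\capacityU_i$ and disperse them over $\Theta(1/\epsilon)$ \emph{distinct} recipient facilities, each chosen among those already meeting the lower bound in the upper bounded backbone. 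No recipient then gains more than an $\epsilon$ fraction of its own capacity, no new deficiency is created, and the violation falls to $\beta+\epsilon$.

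The hard part will be the cost accounting, and this is where the $1/\epsilon$ factor in the ratio is spent: a single block of $\Theta(L)$ points is now scattered over $\Theta(1/\epsilon)$ recipients, so I must show that enough spare capacity exists in a bounded neighbourhood of each block and bound the extra transportation cost against $\mathrm{OPT}$. The natural route is to locate recipients by growing a ball around the block, charge each fragment's travel to the two backbone radii via the triangle inequality, and argue that the congestion---up to $\Theta(1/\epsilon)$ fragments competing for the spare capacity near a common region---inflates the charge to any unit of the optimum's transportation cost by at most a $\Theta(1/\epsilon)$ factor, giving the claimed $O((\alpha+\gamma)/\epsilon)$ bound. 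The residual obstacle is to run this uniformly over the five objectives: for the supplier/center variants the same blocks and fragments are used but the objective is a covering radius rather than a sum, so I would replace the charging argument by a radius-doubling argument, and for facility location I would additionally absorb the cost of the recipients into the facility opening cost already paid by the upper bounded backbone.
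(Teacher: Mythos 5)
There is a genuine gap in your repair step, and it is not the mechanism the paper uses. Your plan is to cut each relocated block into fragments of size at most $\epsilon\,\capacityU_i$ and scatter them over $\Theta(1/\epsilon)$ distinct recipients ``already meeting the lower bound in the upper bounded backbone.'' Three things go wrong. First, such recipients need not exist in any bounded neighbourhood: the facilities of the upper bounded solution may \emph{all} be deficient (that is the generic case the construction must handle), and within one star $\sstar{i}$ of the paper's decomposition the set $\etainv{i}$ may contain only one or two facilities, so there is no supply of $\Theta(1/\epsilon)$ nearby recipients to disperse over. Second, even where recipients exist, nothing in your argument prevents one recipient from collecting $\epsilon$-fragments from many different deficient blocks; your congestion discussion addresses the \emph{cost} of competing fragments but not the \emph{load}, so the claimed $\beta+\epsilon$ capacity bound does not follow. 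Third, by insisting that no facility outside the upper bounded backbone is opened, you discard the one resource that makes the leftover clients feasible: the paper reserves up to $\B_i - |N_{y_l}|$ clients from $\sigmaprimeoneinv{i}$ at the star center $i$ (a facility of the \emph{lower} bounded solution, which may be opened) precisely so that the final batch can meet a lower bound.

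For contrast, the paper's actual use of $2\B_t \le \capacityU_t$ is different and more local. Within a star it caps each opened facility $y_{l'}$ at $\beta\capacityU_{y_{l'}}$ clients and carries the overflow forward in the bag; it remembers the last facility $Prev$ that was skipped because $|Bag| < \B_{Prev}$, together with $Prev_{count}=|Bag|$ at that moment. At the end, if the residual set $Bag \cup N_{y_l} \cup \res{i}$ fits within $(\beta+\epsilon)\capacityU$ of either $i$ or $y_l$ it is placed there; otherwise \emph{both} $Prev$ and $y_l$ are opened and the residual is split between them so that each receives at least its lower bound --- feasible exactly because $\capacityU_t \ge \B_r + \B_s$ under the hypothesis with one bound uniform. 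The $1/\epsilon$ in the ratio arises not from scattering but from a charging argument: when the split is forced, one shows $Prev_{count} > \epsilon\,\B_{Prev}$, so the at most $\B_{Prev}$ clients rerouted to $Prev$ can be charged to the more than $\epsilon\,\B_{Prev}$ clients whose {\em Type-III}-style cost is already bounded. If you want to salvage a dispersal-based proof you would need to prove existence and disjointness of recipients, which the star structure does not give you; the two-facility split is the step your proposal is missing.
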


\begin{theorem} \label{FinalResultLBUBkFL2L1}
For $2\B_{i} \leq \capacityU_i \forall i \in \facilityset$, given an $\alpha$-approximation for uniform Upper bounded $k$-Median problem (UkM)/$k$-Facility Location problem (UkFL)/Facility Location problem (UFL)/$k$-Supplier problem (UkS)/$k$-Center problem (UkC) violating the upper bound by a factor of $\beta$ and a $\gamma$-approximation for Lower bounded Facility Location problem (LFL)/$k$-Supplier problem (LkS)/$k$-Center problem (LkC), an  $(O(\alpha + \gamma)/\epsilon)$-approximation can be obtained for Lower and Upper bounded $k$-Median problem (LUkM)/$k$-Facility Location problem (LUkFL)/Facility Location problem (LUFL)/$k$-Supplier problem (LUkS)/$k$-Center problem (LUkC) with uniform upper bounds  in polynomial time that violates the upper bound by a factor of $\beta + \epsilon$ for a fixed $\epsilon > 0$. 
\end{theorem}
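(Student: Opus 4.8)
The plan is to reuse the two black-box solutions that drive Theorem~\ref{LBUBkFLTheorem} and then replace its combination step by a finer one that exploits the gap $2\B_i \le u$, where $u$ is the uniform capacity. First I would run the $\alpha$-approximation for the uniform Upper bounded variant to obtain a solution $\mathcal{S}_U$ whose open facilities respect $u$ up to the factor $\beta$, and the $\gamma$-approximation for the Lower bounded variant to obtain $\mathcal{S}_L$ whose open facilities each serve at least $\B_i$ clients. Since the optimum of the doubly bounded instance is itself feasible for each singly bounded relaxation, its cost $OPT$ upper-bounds the optimum seen by each black box, so the two solutions together cost $O(\alpha+\gamma)\cdot OPT$. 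The whole problem thus reduces to merging $\mathcal{S}_U$ and $\mathcal{S}_L$ into one assignment meeting both bounds with capacity blow-up only $\beta+\epsilon$.

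Next I would pin down the source of the $+1$ in Theorem~\ref{LBUBkFLTheorem}: a facility that already carries up to $\beta u$ clients in $\mathcal{S}_U$ may be forced to absorb an entire deficient cluster of size up to $u$ so as to push it above its lower bound, producing load $(\beta+1)u$. To avoid this I would classify the clusters of $\mathcal{S}_U$ as \emph{dense} (already holding at least $\B_i$ clients, hence valid on their own) and \emph{sparse} (holding fewer), and consolidate the sparse ones, guided by the co-location dictated by $\mathcal{S}_L$, into groups of total size in $[\B_i, 2\B_i]$. Here the gap condition does the essential work: because $2\B_i \le u$, each consolidated group fits within the capacity $u$ \emph{without any violation}, so the only mass that can still cause overflow is a single leftover remainder of size below $\B_i \le u/2$ left at the tail of the consolidation.

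To keep even this remainder from inflating a cluster's load by a full $u$, I would not dump it on one recipient but split it into pieces of size at most $\epsilon u$ and spread them over $\Theta(1/\epsilon)$ nearby valid clusters, each of which, being already loaded to at most $\beta u$, ends at most at $(\beta+\epsilon)u$; every recipient retains its $\ge \B_i$ clients, so lower bounds are untouched. This is exactly the trade-off asserted in the statement: finer pieces shrink the overflow from $u$ to $\epsilon u$, but locating $\Theta(1/\epsilon)$ recipients with room forces a larger search neighbourhood, and charging the resulting detours to the clients' original service costs in $\mathcal{S}_U$ and $\mathcal{S}_L$ (via the triangle inequality across each meta-cluster) inflates the cost bound by an $O(1/\epsilon)$ factor, yielding the claimed $(O(\alpha+\gamma)/\epsilon)$-approximation. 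The five variants are treated identically, the cost being read as a sum of distances for LUkM/LUkFL/LUFL and as a maximum distance for LUkS/LUkC.

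The main obstacle I anticipate is discharging the remainder while \emph{simultaneously} respecting both bounds and staying within cost: the lower bound wants clients concentrated, the tightened capacity wants them dispersed, and one must certify that enough recipient clusters with $\epsilon u$ of spare room always exist within a neighbourhood whose radius can be charged to the base solutions. I would formalise this as an integral feasibility claim for spreading the remainder, proving existence from $2\B_i \le u$ (which both guarantees valid consolidated groups and bounds the remainder by $u/2$) and then bounding the transport cost of the chosen spread. Establishing this feasibility-plus-cost statement under the gap condition is the crux; the surrounding accounting mirrors the proof of Theorem~\ref{LBUBkFLTheorem}.
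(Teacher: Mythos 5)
There is a genuine gap at the step you yourself flag as the crux, and it is not a gap that can be closed as stated. Your plan discharges the leftover remainder (mass up to $\B_i \le \capacityU_i/2$) by cutting it into pieces of size at most $\epsilon\,\capacityU_i$ and distributing them over $\Theta(1/\epsilon)$ already-valid clusters. This needs two things you never establish: (i) that $\Theta(1/\epsilon)$ recipient clusters even \emph{exist} --- if the solution $\mathcal{S}_U$ opens fewer than $1/(2\epsilon)$ facilities in the relevant neighbourhood (or fewer than that in total, e.g.\ $k$ small), a remainder of $\Theta(\capacityU_i)$ cannot be absorbed in $\epsilon\,\capacityU_i$-sized doses at all; and (ii) that the transport to the $\Theta(1/\epsilon)$-th nearest recipient is chargeable to the base solutions --- nothing in $\mathcal{S}_U$ or $\mathcal{S}_L$ bounds that distance, and the triangle-inequality charging you invoke only controls hops between a client, its $\mathcal{S}_U$-facility, and its $\mathcal{S}_L$-facility, not hops to far-away clusters that happen to have spare room. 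You would also need to rule out a single recipient absorbing pieces from several different remainders, which would push it past $(\beta+\epsilon)\capacityU_i$.

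The paper avoids all of this by staying inside its star/almost-DAG framework and opening exactly \emph{one} extra facility rather than $\Theta(1/\epsilon)$ recipients: while sweeping the facilities $y_1,\dots,y_l$ of a star $\sstar{i}$ it records the last facility $Prev$ that was skipped, and in the overflow case ($|Bag\cup N_{y_l}\cup \res{i}| > (\beta+\epsilon)\capacityU_{y_l}$) it opens both $Prev$ and $y_l$, giving $Prev$ exactly $\B_{Prev}$ clients and the rest to $y_l$; the condition $2\B_t\le\capacityU_t$ (plus one bound uniform) guarantees both lower bounds can be met and neither facility exceeds $(\beta+\epsilon)$ times its capacity. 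The $1/\epsilon$ loss then appears in the \emph{cost analysis}, not in the number of recipients: the overflow condition forces $|Bag| > \epsilon\,\capacityU_{y_l} \ge \epsilon\B_{Prev}$ at the moment $Prev$ was skipped, so at least $\epsilon\B_{Prev}$ clients already ``witness'' the location of $Prev$, and routing $\B_{Prev}$ clients there is charged at rate $1/\epsilon$ against the Type-III cost of those witnesses. This is a different mechanism from your dispersal scheme, and it is the part your proposal is missing; without it (or a proven substitute for your existence-plus-chargeability claim), the argument does not go through.
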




Tables~\ref{tab2} and \ref{tab21} present our results after applying Theorem~\ref{FinalResultLBUBkFL2L11} and \ref{FinalResultLBUBkFL2L1} to different problems for  $2\B_{i} \leq \capacityU_i \forall i \in \facilityset$. 

\begin{table}
    \centering
    \begin{tabular}{|l|c|c|c|c|c|c|}
    \hline
          \textbf{Problem} & \textbf{$\alpha$} & \textbf{$\beta$} &  \textbf{$\gamma$} & \textbf{Factor} & \textbf{Capacity Violation} \\\hline
          
           LUkM & $O(1/\epsilon^5)$~\cite{Demirci2016} & $(1+\epsilon)$ & $82.6$~\cite{Ahmadian_LBFL} &  $O(1/\epsilon^6)$ & $(1+\epsilon)$ \\ \hline
           
            LUFL & $5$~\cite{Bansal} & $1$ & $82.6$~\cite{Ahmadian_LBFL} & $O(1/\epsilon)$ & $(1+\epsilon)$ \\ \hline
            
            LUkS & $3$~\cite{AnMP2015} & $1$ & $3$~\cite{Ahmadian2016}  & O(1) &  $(1+\epsilon)$  \\ \hline
            
            LUkC & $2$~\cite{AnMP2015} & $1$ & $2$~\cite{Aggarwal2010}  & O(1) &  $(1+\epsilon)$  \\ \hline
         
    \end{tabular}
    \caption{Our results for uniform lower bounds when $2\B_i \leq \capacityU_i~\forall i \in \facilityset$}.
    \label{tab2}
\end{table}

\begin{table}
    \centering
    \begin{tabular}{|l|c|c|c|c|c|c|}
    \hline
          \textbf{Problem} & \textbf{$\alpha$} & \textbf{$\beta$} & \textbf{$\gamma$} & \textbf{Factor} & \textbf{Capacity Violation} \\ \hline
          
          LUkFL & $O(1/\epsilon^2)$~\cite{GroverGKP18} & $(2+\epsilon)$ & O(1)~\cite{Li_NonUnifLBFL} & $O(1/\epsilon^3)$& $(2+\epsilon)$ \\ \hline
          
           LUkM & $O(1/\epsilon^2)$~\cite{ByrkaRybicki2015} & $(1+\epsilon)$ & O(1)~\cite{Li_NonUnifLBFL} &  $O(1/\epsilon^3)$ & $(1+\epsilon)$ \\ \hline
           
           LUFL & $3$~\cite{mathp} & $1$ & O(1)~\cite{Li_NonUnifLBFL} & $O(1/\epsilon)$ & $(1+\epsilon)$ \\ \hline
            
           LUkS & $3$~\cite{AnMP2015} & $1$ & $3$~\cite{Ahmadian2016} & O(1) & $(1+\epsilon)$  \\ \hline
            
           LUkC & $2$~\cite{k-centerKhuller2000} & $1$ & $3$~\cite{Ahmadian2016} & O(1) & $(1+\epsilon)$  \\ \hline
         
    \end{tabular}
    \caption{Our results for uniform upper bounds when $2\B_i \leq \capacityU_i~\forall i \in \facilityset$.}
    \label{tab21}
\end{table}

\subsection{Related Work}

The classic $k$-center problem has been approximated with constant factor ratio ($2$) by Gonzalez~\cite{Gonzalez1985} and,  Hochbaum and Shmoys~\cite{HochbaumShmoys1986}. The ratio is tight, i.e., it is NP-hard to approximate it with a factor less than $2$ unless P=NP~\cite{Hsu1979}. Hochbaum and Shmoys~\cite{HochbaumShmoys1986} also approximated $k$-supplier to $3$ factor, which is also tight. Upper bounded $k$-center problem  was first studied by Bar-Ilan \etal~\cite{BarIlan1993} in 1993 for uniform capacities. The authors gave a $10$-factor approximation algorithm. The factor was subsequently improved to $6$ by Khuller and Sussmann~\cite{k-centerKhuller2000}. For non-uniform capacities, Cygan \etal~\cite{CyganFOCS2012} gave the first constant factor approximation algorithm. Their factor was large, which was later improved to $9$ by An \etal~\cite{AnMP2015}. An $11$ factor approximation algorithm for UkS problem was given by An \etal~\cite{AnMP2015}. 
Aggarwal \etal~\cite{Aggarwal2010} introduced and gave the first true constant ($2$) factor approximation for $k$-center with uniform lower bounds. For $k$-supplier and $k$-center  with non-uniform lower bounds, Ahmadian and Swamy~\cite{Ahmadian2016} gave the first true constant factor ($3$) approximation algorithm.

$k$-Median has been extensively studied in past~\cite{Archer03lagrangianrelaxation,arya,ByrkaPRST14, Charikar,Charikar:1999,Charikaricalp2012,Jain:2001, li2013approximating} with the best approximation ratio of $2.611 + \epsilon$ by Byrka \etal ~\cite{ByrkaPRST14}.  Several constant factor approximations are known for upper bounded kM using LP rounding~\cite{capkmGijswijtL2013,capkmByrkaFRS2013,Charikar,Charikar:1999,ChuzhoyR05,GroverGKP18,KPR,capkmshanfeili2014} that violate upper bounds or the cardinality constraint by a factor of $2$ or more.  
Barrier of $2$ was broken to obtain $(1+\epsilon)$ violation in upper bounds  for uniform and non-uniform bounds by Byrka \etal~\cite{ByrkaRybicki2015} and  Demirci \etal~\cite{Demirci2016} respectively by strengthening the LP.	Li gave constant factor approximations using at most $(1 + \epsilon) k$ facilities for uniform~\cite{capkmshili2014} as well as non-uniform~\cite{Lisoda2016} upper bounds.
 For kM with lower bounds (LkM), Guo et al. \cite{Guo_LBkM}, Han \etal~\cite{Han_AIMM} and, Arutyunova and Schmidt \cite{arutyunova_LBkM} gave true constant factor approximations.
    For LkFL, Han et al. \cite{Han_LBkM} gave a constant factor approximation that violates the lower bounds.

A wide range of approximation algorithms  have been developed for the uncapacitated facility location problem~\cite{arya,Byrka07,charikar2005improved,Chudak98,ChudakS03,guha1999greedy,Jain:2001,KPR,Li13,mahdian2001greedy,mahdian_1.52,PalT03,Shmoys,sviridenko2006improved,Zhang2007126}. Li~\cite{Li13} gave a $1.488$ factor algorithm almost closing the gap between the best known and best possible of $1.463$ by Guha \etal ~\cite{guha1999greedy} for the problem. 
UFL has been extensively studied in past~\cite{Anfocs2014,capkmByrkaFRS2013,GroverGKP18,LeviSS12,Shmoys}. For uniform bounds, Shmoys et al.~\cite{Shmoys} gave the first constant factor($7$) approximation violating the upper bounds by a factor of $7/2$ using LP rounding technique.  An $O(1/\epsilon^2)$-approximate solution, with $(2 + \epsilon)$ violation in upper bounds was given by Byrka et al.~\cite{capkmByrkaFRS2013} as a special case of UkFL. Grover et al.~\cite{GroverGKP18} further reduced the violation to $(1 + \epsilon)$.  For non-uniform capacities, Levi et al. \cite{LeviSS12} gave a 5-factor approximation algorithm with uniform facility opening costs. An et al.~\cite{Anfocs2014} strengthened the LP to obtain a true constant factor approximation for the problem. 
Local search technique has been particularly successful in dealing with UFL   for uniform capacities~\cite{mathp,ChudakW99,KPR} and non-uniform capacities~\cite{Bansal,mahdian_universal,paltree,zhangchenye}. The best known results are $3$-factor and $5$-factor approximation by Aggarwal et al.~\cite{mathp} and Bansal et al.~\cite{Bansal} for uniform and non-uniform capacities respectively.
Facility location with lower bounds was first introduced by Karger and Minkoff \cite{Minkoff_LBFL} and, Guha et al.~\cite{Guha_LBFL} independently in 2000.
Both the works violate the lower bounds.
Svitkina \cite{Zoya_LBFL} gave the first true constant factor ($448$) approximation  with uniform lower bounds, which was subsequently improved to $82.6$ by Ahmadian and Swamy \cite{Ahmadian_LBFL}. For non-uniform lower bounds, Li \cite{Li_NonUnifLBFL} gave the first true constant ($4000$) factor approximation for the problem.

      Several constant factor approximation algorithms have been developed for the uncapacitated $k$-facility location problem~\cite{Charikar:1999,Jain,jain2002new,Jain:2001,Zhang2007126} with the best approximation ratio being $ 2 + \sqrt{3} +\epsilon$ due to Zhang~\cite{Zhang2007126}.
    For UkFL problem, Aardal \etal ~\cite{capkmGijswijtL2013} gave a constant factor approximation  for the problem with uniform opening costs, using at most $2k$ facilities for non-uniform capacities and $2k - 1$ facilities for uniform capacities. Byrka \etal~\cite{capkmByrkaFRS2013} gave an $O(1/\epsilon^2)$ factor approximation for uniform capacities violating the capacities by a factor of $(2 + \epsilon)$. 
    Grover \etal~\cite{GroverGKP18} also gave two constant factor results on the problem with a trade-off between the capacity and the cardinality violations: ($i$) $(1+\epsilon)$ violation in capacity with $2/(1+\epsilon)$\footnote{A careful analysis of the result shows that the violation is $2/(1+\epsilon)$} factor cardinality violation and ($ii$) $(2+\epsilon)$ violation in capacity and no cardinality violation. To the best of our knowledge, no result is known for the problem with non-uniform capacities and general opening costs
     
Results for bounds on both the sides were first obtained by Friggstad et al.~\cite{friggstad_et_al_LBUFL} for facility location problem. The paper considers general lower bounds and uniform upper bounds. They gave the first constant factor approximation using LP rounding techniques, violating both, the lower and the upper bounds.  Gupta \etal~\cite{GroverGD21_LBUBFL_Cocoon} gave the first constant factor approximation when both the bounds are uniform, violating the upper bounds by a factor of $5/2$ while respecting the lower bounds.  Constant factor results were obtained for LUkC and LUkS with uniform lower bounds and general upper bounds independently by Ding \etal~\cite{DingWADS2017} and, R{\"{o}}sner and Schmidt \cite{RosnerICALP2018}.
     

\subsection{High Level Idea}
{
Let $I$ be an input instance.
For LUkM/LUkFL/LUFL, we create two instances: $I_1$ of LFL and $I_2$ of UkM/UkFL/UFL
from $I$ by dropping the upper bounds and the cardinality constraint for $I_1$ and, dropping the lower bound for $I_2$. For LUkS/LUkC, LFL does not work as the component of service cost in kS/kC is different from the one in FL.
Thus, for LUkS/LUkC, we create two instances: $I_1$ of LkS/LkC and $I_2$ of UkS/UkC from $I$ in a similar way.
Since, any solution to $I$ is feasible for $I_1$ and $I_2$, costs of $I_1$ and $I_2$ are bounded. We obtain approximate solutions $AS_1$ and $AS_2$ to $I_1$ and $I_2$ respectively and, combine them to obtain a solution for $I$.
}
If the solution of $I_2$
violates the upper bound by a factor of 
{$\beta$,} then our solution violates it by a factor of 
{$\beta+1$.}
If there were no upper bounds, we could do the following: For every facility $i$ opened in $AS_1$, we open its nearest facility $\eta(i)$ in $AS_2$ and assign its clients to $\eta(i)$. The cost of doing this can be bounded. However, we can not do this in the presence of upper bounds as $i$ could be serving any number of clients in $AS_1$ and hence $\eta(i)$ may be assigned arbitrary number of clients. To take care of the upper bounds, we do the following:
for every facility $i'$ in $AS_2$, we identify its nearest facility $\eta(i')$ in $AS_1$. We treat all such facilities $i'$'s, whose nearest facility $\eta(i')$ is same, along with $\eta(i')$, together and open at most all but one of them. Some of the facilities in $AS_2$ that violate the lower bound are closed transferring their clients to other facilities. When we arrive at a facility that has received sufficient number of clients, we open it and assign all the clients to it. Clearly such a facility receives no more clients than
$\beta+1$ times its capacity.


\subsection{Organisation of the Paper}
Section \ref{LBUBkM} presents our framework for the LU(lower and upper bounded) variants of the various clustering problems under consideration. 
In Section \ref{2LU}, we improve our results for the special case when $2\B_{i} \leq \capacityU_{i}~\forall i \in \facilityset$.
We finally conclude in Section~\ref{cncl} giving scope of future work.

\section{Framework for Lower and Upper Bounded Clustering}
\label{LBUBkM}

In this section, we present our framework for lower and upper bounded variants of the clustering problems under consideration.

\noindent
{\bf Problem Definitions:} Given a set $\clientset$ of clients, a set $\facilityset$ of facilities, opening cost $f_i$ associated with facility $i$, metric cost $\dist{i}{j}$ (called the {\em service cost}) of serving a client $j$ from a facility $i$,  lower bounds $\B_i$ and upper bounds $\capacityU_i$ on the minimum and the maximum number of clients,  respectively,  an open facility $i$ should serve, cardinality bound $k$ on the maximum number of facilities that can be opened, the objective is to open a subset $\facilityset' \subseteq \facilityset$ and compute an assignment function $\sigma : \clientset \rightarrow \facilityset'$ (where $\sigma(j)$ denotes the facility that serves client $j$ in the solution)
such that $|\facilityset'| \leq k$ and $ \B_i \leq |\sigma^{-1}(i)| \leq  \capacityU_i ~\forall i \in \facilityset'$. We consider the following variants of the clustering problems:

\begin{itemize}
    \item {\bf Lower and Upper bounded $k$-Median (LUkM)}: $f_i = 0~\forall i \in \facilityset, \facilityset \cap \clientset = \emptyset$. We wish to minimise $\sum_{i \in \facilityset'} \sum_{j \in \clientset} \dist{j}{\sigma(j)}$.
    
    \item {\bf Lower and Upper bounded Facility Location (LUFL)}: $k = |\facilityset|, \facilityset \cap \clientset = \emptyset$. The aim is to minimise $\sum_{i \in \facilityset'} f_i + \sum_{i \in \facilityset'}\sum_{j \in \clientset} \dist{j}{\sigma(j)}$.
    
     \item {\bf Lower and Upper bounded $k$-Facility Location (LUkFL)}: $\facilityset \cap \clientset = \emptyset$. The aim is to minimise $\sum_{i \in \facilityset'} f_i + \sum_{i \in \facilityset'}\sum_{j \in \clientset} \dist{j}{\sigma(j)}$.
     
      \item {\bf Lower and Upper bounded $k$-Center (LUkC)}:  $f_i = 0~\forall i \in \facilityset, \facilityset = \clientset$. The aim is to minimise $max_{j \in \clientset} \dist{j}{\sigma(j)}$.
    
     \item {\bf Lower and Upper bounded $k$-Supplier (LUkS)}:  $f_i = 0~\forall i \in \facilityset, \facilityset \cap \clientset = \emptyset$.  The aim is to minimise $max_{j \in \clientset} \dist{j}{\sigma(j)}$.
\end{itemize}

Algorithm $\ref{LBUBkM_Algo}$ presents the main steps of the framework
where $Cost_\mathcal{I}(\mathcal{S})$ denotes the cost of a solution $\mathcal{S}$ to an instance $\mathcal{I}$. 
Steps $1 - 4$ are self explanatory. We present the main step, step $5$ in detail in Section~\ref{bicriteriakM}.



\begin{algorithm}[H] 
\label{LBUBkM_Algo}
    \setcounter{AlgoLine}{0}
	\SetAlgoLined
    \DontPrintSemicolon  
    \SetKwInOut{Input}{Input}
	\SetKwInOut{Output}{Output}
    \Input{Instance $I$}

    
     For LUkM/LUkFL/LUFL, create an instance $I_1$ of LFL from $I$ by dropping the upper bounds and the cardinality constraint and for LUkS/LUkC, create $I_1$ of LkS/LkC by dropping the upper bounds. Since any solution to $I$ is feasible for $I_1$ as well, we have  $Cost_{I_1}(O_1) \leq Cost_{I}(O)$, where
    $O$ and $O_1$ denote optimal solution to $I$ and $I_1$ respectively.\;

    Create an instance $I_2$ of UkM/UkFL/UFL/UkS/UkC  by dropping the lower bounds from $I$. Since any solution to instance $I$ is feasible for $I_2$ as well, we have $Cost_{I_2}(O_2) \leq Cost_{I}(O)$, where $O_2$ denotes an optimal solution to $I_2$.\;

    Obtain a $\gamma$-approximate solution $AS_1 = (\facilityset_1,\sigma_1)$ to instance $I_1$.\;
    
    Obtain an $\alpha$-approximate solution $AS_2 = (\facilityset_2,\sigma_2)$ to instance $I_2$. Let $\beta$ denote the violation in upper bounds, if any, in $AS_2$.\;

    Combine solutions $AS_1$ and $AS_2$ to obtain a  solution $AS_I= (\facilityset_I, \sigma_I)$ to $I$ such that
    {$Cost_{I}(AS_I) \leq \Delta_1 Cost_{I_1}(AS_1) + \Delta_2 Cost_{I_2}(AS_2) $}
    with $(\beta+1)$ violation in upper bound, where $\Delta_1$ and $\Delta_2$ are positive constants.
	 
	\caption{
	}
	\label{LBUBkM_Algo}
\end{algorithm}

\subsection{Combining solutions $AS_1$ and $AS_2$ to obtain $AS_I$} \label{bicriteriakM}

In this section, we combine solutions  $AS_1=(\facilityset_1, \sigma_1)$ and $AS_2 = (\facilityset_2, \sigma_2)$ to obtain a 
solution $AS_I= (\facilityset_I, \sigma_I)$ to $I$ respecting the lower bounds on facilities and the cardinality constraint, while violating the upper bounds by a factor of $(\beta+1)$. 

 We will open some facilities in $\facilityset_1 \cup \facilityset_2$. Note that
the facilities in $\facilityset_1$ may violate the upper bounds and, the facilities in $\facilityset_2$ may violate the lower bounds. Thus, we need to select which facilities in $\facilityset_1 \cup \facilityset_2$ to open closing the remaining facilities. To facilitate this, we construct a directed graph $G_1$ on the set of facilities in $\facilityset_1 \cup \facilityset_2$. For a facility $i \in \facilityset_2$, let $\eta(i)$ denote the  facility in $\facilityset_1$ nearest to $i$ (assuming that the distances are distinct). Add an edge $(i, \eta(i))$ in the graph.  
In order to avoid self loops when $i = \eta(i)$, we denote the occurrence of  $i$ in $\facilityset_2$ by $i_c$ so that $\eta(i_c) = i$. Thus, we obtain a forest of trees where-in each tree is a {\em star}. Formally, we define a star $\sstar{i}$ to be a collection of nodes in $\{i \cup \eta^{-1}(i)\}$ with $i \in \facilityset_1$ as the {\em center} of the star and $\eta^{-1}(i) \subseteq \facilityset_2$. See Figure \ref{star}. Though every facility in $\facilityset_2$ belongs to some star, there may be facilities in $\facilityset_1$ that do not belong to any star.

\begin{figure}[]
		\begin{center}
	    \begin{tabular}{c}
		\includegraphics[width=80mm,scale=0.65]{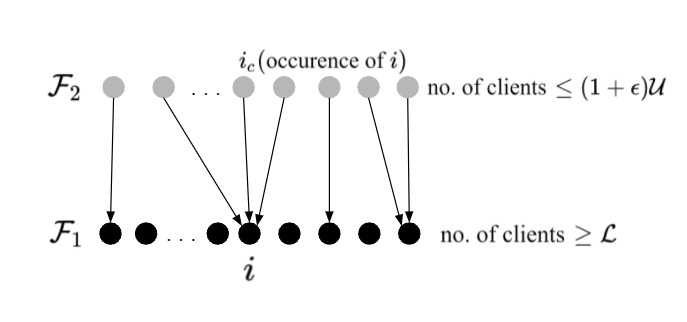}
	    \end{tabular}
		\end{center}
		\caption{
	Graph $G_1$: $I_1$ is an instance of LFL/LkS/LkC and $I_2$ that of UkM/UkFL/UFL/UkS/UkC.
		}
	\label{star}
\end{figure}

We process the stars to decide the set of facilities to open in $\facilityset_1 \cup \facilityset_2$.
Consider a star $\sstar{i}$ centered at facility $i$. Clearly the total assignments on $i$ in $AS_1$ satisfy the lower bound but may violate the upper bound arbitrarily. On the other hand, the total assignments on a facility $i' \in \eta^{-1}(i)$ in $AS_2$ satisfy the upper bound (within $\beta$ factor) but may violate the lower bound arbitrarily. We close some facilities in $ \eta^{-1}(i)$ by transferring their clients (assigned to them in $AS_2$) to other facilities in $\etainv{i}$ (and to $i$, if required) and open those who have got at least $\B$ clients. We may also have to open $i$ in the process. We make sure that upper bound is violated within the claimed bounds at $i$ and the total number of facilities opened in $\sstar{i}$ is at most $|\eta^{-1}(i)|$. 
{ Thus, we open no more than $k$ facilities in all.}

To determine which facilities in $\etainv{i}$ to open and which facilities to close,
we consider the facilities in $\etainv{i}$ in the order of decreasing distance from $i$. Let the order be $y_1, y_2,...,y_l$.
 We collect the clients assigned to them, by $AS_2$, in a bag looking for a facility $t$ at which we would have collected at least $\B_t$ clients. We open such a facility and empty the bag by assigning all the clients in the bag to the facility just opened  and start the process again with the next facility in the order. The problem occurs when at the last facility ($y_l$), in the order, the bag has less than $\B_{y_l}$ clients. In this case, we would like to assign these clients to the star center $i$ making use of the fact that $i$ was assigned at least $\B_i$ clients in $AS_1$. The problem here is that the clients assigned to $i$ in $AS_1$ might have been assigned to the facilities in $\etainv{i'}$ for some 
 star $S_{i'}$ processed earlier or to the facilities in $\etainv{i}$ itself.
Figure \ref{counter} explains the situation. Thus, we need to process the stars in a carefully chosen sequence so as to avoid  this kind of dependency amongst them. That is, we process the stars in a such a way that if we are processing star $\sstar{i}$, then the clients assigned to $i$ in $AS_1$ are not assigned to facilities in $\etainv{i'}$ for a star $S_{i'}$ processed earlier. For this, we construct a weighted directed (dependency) graph $G_2$ on stars.
 We will denote the graph by $G_2 (
\sigma_1, \sigma_2)$ to show that it is a function of the assignments in $AS_1$ and $AS_2$. The graph changes as any of these assignments change.
 
\begin{figure}[]
		\begin{center}
	    \begin{tabular}{c}
		\includegraphics[width=35mm]{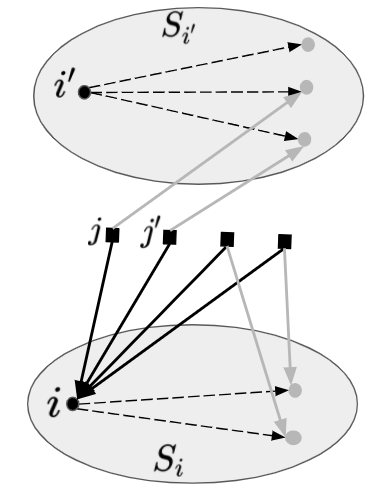}
	    \end{tabular}
		\end{center}
		\caption{Let $\B_i=4$ for all $i \in \facilityset$.  Black and grey edges show the assignment of clients in $AS_1$  and $AS_2$ respectively. Star $S_{i'}$ is processed before star $\sstar{i}$.
		Clients $j, j'$ assigned to $i$ in $AS_1$ have already been assigned to facilities in $\etainv{i'}$ and hence are not available while processing $\sstar{i}$.
		}
	\label{counter}
\end{figure}

The graph $G_2 (\sigma_1, \sigma_2)$ has the stars ($\sstar{i}, i: |\etainv{i}| > 0$) as the vertices and we include the directed edge $(\sstar{i_1}, \sstar{i_2}$) from star $\sstar{i_1}$ to $\sstar{i_2}$ if a client $j \in \sigmaoneinv{i_1}$ is served (in $AS_2$) by some facility $i' \in \eta^{-1}(i_2)$ in star $\sstar{i_2}$ i.e., $|\sigmaoneinv{i_1} \cap (\cup_{i' \in \eta^{-1}(i_{2})}\sigma_2^{-1}(i'))| > 0$.  Let $w(\sstar{i_1}, \sstar{i_{2}}) = |\sigmaoneinv{i_1} \cap (\cup_{i' \in \eta^{-1}(i_{2})} \sigma_2^{-1}(i') )|$ be the weight on the edge $(\sstar{i_1}, \sstar{i_2}$). 
{See Figure \ref{Break_cycles1}.} If the resulting graph is a directed acyclic graph (DAG),  a topological ordering of its vertices gives us a sequence in which the stars can be processed. However, if there are directed cycles in the graph, we redefine the assignments in $AS_1$ to obtain another solution $\hat{AS}_1 = < \facilityset_1, \sigmaprimeone>$ to break the cycles. The dependency graph for $(\sigmaprimeone, \sigma_2)$ will then be an {\em almost-DAG}. We say that a directed graph is an {\em almost-DAG}, if the only directed cycles in it are self loops.

\begin{figure}[]
		\begin{center}
	    \begin{tabular}{c}
		\includegraphics[width=120mm]{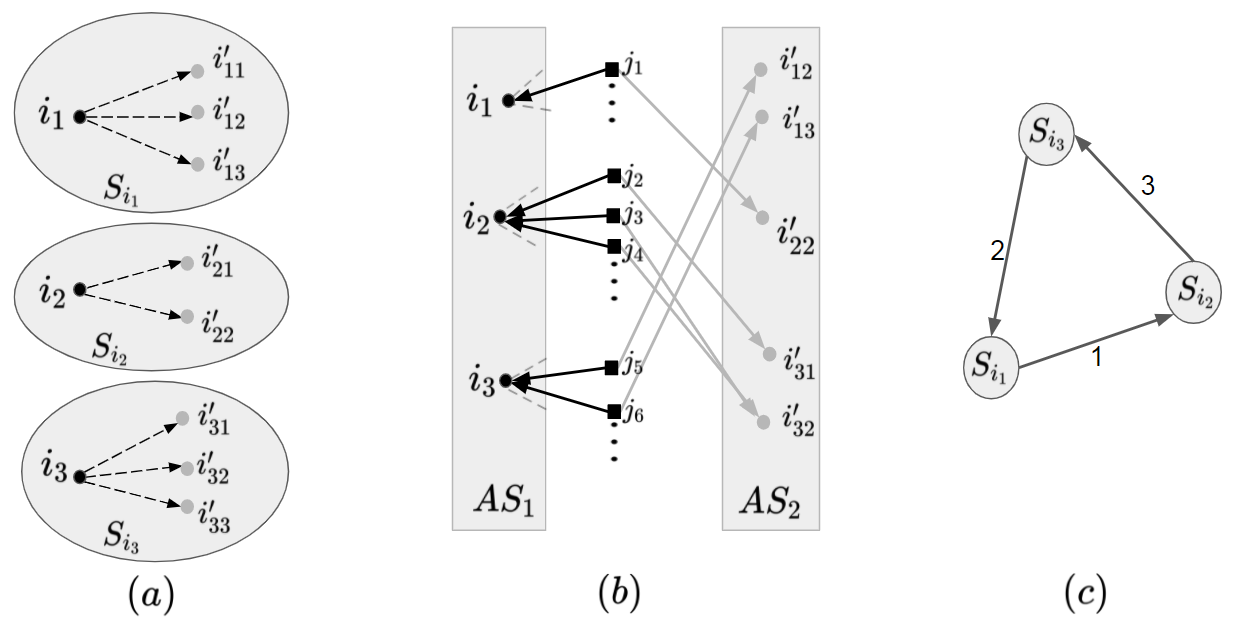}
	    \end{tabular}
		\end{center}
		\caption{($a$) Stars $S_{i_1}$, $S_{i_2}$ and $S_{i_3}$, ($b$) $\sigmaoneinv{i_1} \cap (\cup_{i' \in \eta^{-1}(i_{2})}  \sigma_2^{-1}(i') ) = \{j_1\}$, $\sigmaoneinv{i_2} \cap (\cup_{i' \in \eta^{-1}(i_{3})}  \sigma_2^{-1}(i') ) = \{j_2, j_3,j
		_4\}$, $\sigmaoneinv{i_3} \cap (\cup_{i' \in \eta^{-1}(i_{1})}  \sigma_2^{-1}(i') ) = \{j_5,j_6\}$, ($c$) Directed cycle in  $G_2(\sigma_1, \sigma_2)$.
		}
	\label{Break_cycles1}
\end{figure}

{\bf 
Breaking the cycles} (see Algorithm~\ref{BreakingCycle} and Figure \ref{Break_cycles2}): let $SC = <\sstar{i_1}, \sstar{i_2}, \ldots, \sstar{i_l}>$ be a directed cycle in the graph $G_2(\sigma_1, \sigma_2)$. Wlog, let $(\sstar{i_1}, \sstar{i_2})$ be the minimum weight edge in the cycle. We reassign any $\kappa = w(\sstar{i_1}, \sstar{i_2})$ clients  in $\sigmaoneinv{i_r} \cap (\cup_{i' \in \eta^{-1}(i_{r+1})}\sigma_2^{-1}(i'))$ from $i_r$ to $i_{r+1}$ and reduce the weight of the edge $w(\sstar{i_r}, \sstar{i_{r+1}})$ by $\kappa$ for $r =  1 \ldots l-1$ and reassign any $\kappa$ clients in $ \sigmaoneinv{i_l} \cap (\cup_{i' \in \eta^{-1}(i_1)} \sigma_2^{-1}(i'))$ from $i_l$ to $i_1$ and reduce the weight of the edge $w(\sstar{i_l}, \sstar{i_{1}})$ by $\kappa$. Let $\sigmaprimeone$ denote the new assignments in $AS_1$. Note that $|\sigmaprimeoneinv{i}| = |\sigmaoneinv{i}|$ and hence $|\sigmaprimeoneinv{i}| \ge \B_i$ is maintained for all $i \in \facilityset_1 $ after the reassignments.
The weight of the edge $(\sstar{i_1}, \sstar{i_2})$ becomes zero and we remove it thereby breaking the cycle. 

\begin{figure}[]
		\begin{center}
	    \begin{tabular}{c}
		\includegraphics[width=120mm]{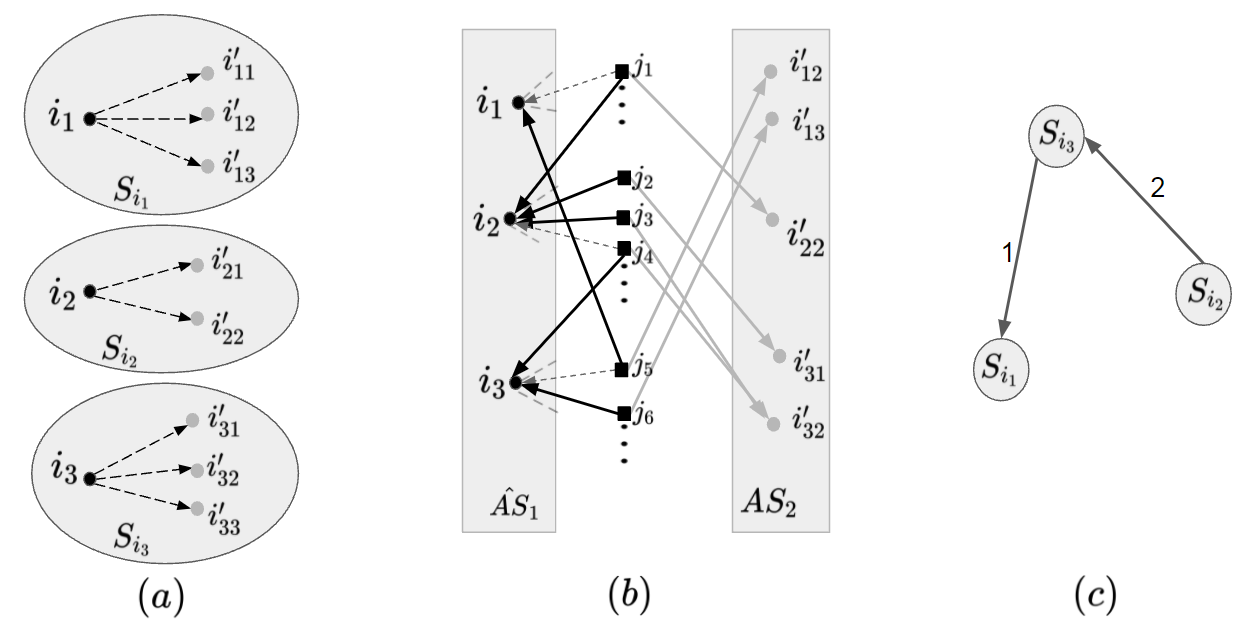}
	    \end{tabular}
		\end{center}
		\caption{Breaking a cycle: ($b$) Assign $j_1$ to $i_2$, $j_4$ to $i_3$ and $j_5$ to $i_1$, that is, $\sigmaprimeone(j_1)=i_2$, $\sigmaprimeone(j_4)=i_3$ and, $\sigmaprimeone(j_5)=i_1$ ($c$) The subgraph of $G_2(\sigmaprimeone, \sigma_2)$ after breaking the cycle.
		}
	\label{Break_cycles2}
\end{figure}

\begin{algorithm}[H] 	
    \setcounter{AlgoLine}{0}
	\SetAlgoLined
    \DontPrintSemicolon  
    \SetKwInOut{Input}{Input}
	\SetKwInOut{Output}{Output}
    \Input{Graph $G_2(\sigma_1, \sigma_2)$}
    
    \Output{$G_2(\sigmaprimeone, \sigma_2)$}
    
    $\sigmaprimeone(j) \gets \sigma_1(j)~\forall j \in \clientset$\;
     
    If $G_2$ is an almost-DAG, return $\sigmaprimeone$.
     
     \For{all directed cycles $SC = <\sstar{i_1}, \sstar{i_2}, \ldots, \sstar{i_l}>$ in $G_2$}
     {
    $\kappa \gets w(\sstar{i_1}, \sstar{i_2})$ \tcp{(Wlog, assume $(\sstar{i_1}, \sstar{i_2})$ as the minimum weight edge in the cycle}
    
    $Bag \gets \emptyset$\;
    
  	 \For {$r = 1$ to $l-1$} {  
		 
		 Add $\kappa$ clients arbitrarily from $\sigmaoneinv{i_r} \cap (\cup_{i' \in \eta^{-1}(i_{r+1})}\sigma_2^{-1}(i'))$ to $Bag$\;
		 
		 \For{$j \in Bag$}{
		 
		 $\sigmaprimeone(j) \gets i_{r+1}$\; 
		 
		 }
		 
		 $Bag \gets \emptyset$\;
		 
		 $w(\sstar{i_r}, \sstar{i_{r+1}}) = w(\sstar{i_r}, \sstar{i_{r+1}}) - \kappa$\;
		 
         }
         
         Add $\kappa$ clients arbitrarily from $\sigmaoneinv{i_l} \cap (\cup_{i' \in \eta^{-1}(i_{1})}\sigma_2^{-1}(i'))$ to $Bag$\;  
         
         \For{$j \in Bag$}{
		 
		 $\sigmaprimeone(j) \gets i_{1}$\; 
		 
		 }
		 
		  $w(\sstar{i_l}, \sstar{i_{1}}) = w(\sstar{i_l}, \sstar{i_{1}}) - \kappa$\;
		 }
	\caption{Breaking Cycles: Constructing an {\em almost} - DAG $G_2(\sigmaprimeone, \sigma_2)$}
	\label{BreakingCycle}
\end{algorithm}

{\bf
Properties of $G_2 (\sigmaprimeone, \sigma_2)$:} ($i$) $G_2 (\sigmaprimeone, \sigma_2)$ is an {\em almost-DAG}, ($ii$) $|\sigmaprimeoneinv{i} |  \ge \B_i$ $\forall$ $i \in \facilityset_1$.

\begin{lemma} \label{ASPcost}
Let $j \in \clientset$. The service cost paid by $j$ in solution $\hat{AS}_1$, $\dist{j}{\sigmaprimeone(j)}$, is bounded by $\dist{j}{\sigma_1(j)} + 2\dist{j}{\sigma_2(j)}$.
\end{lemma}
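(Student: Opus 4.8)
The plan is to track where a single client $j$ can end up under $\sigmaprimeone$ and then bound its service cost by two applications of the triangle inequality together with the defining (nearest-facility) property of $\eta$. The crucial preliminary observation is that Algorithm~\ref{BreakingCycle} only ever changes the $AS_1$-assignment of a client, never its $AS_2$-assignment, so $\sigma_2(j)$ is fixed throughout. Hence it suffices to show that at termination $\sigmaprimeone(j) \in \{\sigma_1(j),~\eta(\sigma_2(j))\}$: in the first case $\dist{j}{\sigmaprimeone(j)} = \dist{j}{\sigma_1(j)}$ and the claim is immediate, so only the second case requires an estimate.

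First I would record the one structural fact that drives everything. Whenever the algorithm reassigns a client $j$ on an edge $(\sstar{i_r}, \sstar{i_{r+1}})$ (or the wrap-around edge $(\sstar{i_l}, \sstar{i_1})$), it moves $j$ from $i_r = \sigma_1(j)$ to the star centre $i_{r+1}$, and $j$ lies in $\cup_{i' \in \eta^{-1}(i_{r+1})}\sigma_2^{-1}(i')$, which means precisely that $\eta(\sigma_2(j)) = i_{r+1}$. Thus every reassignment sends $j$ to the fixed facility $\eta(\sigma_2(j))$, irrespective of which cycle or edge triggers it. Consequently $\sigmaprimeone(j)$ is either its initial value $\sigma_1(j)$ (if $j$ is never moved) or $\eta(\sigma_2(j))$ (once it has been moved), which establishes the dichotomy above and, importantly, shows the cost cannot accumulate over several cycle-breaking steps. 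One can add that after $j$ is moved to $\eta(\sigma_2(j))$ it contributes only to the self-loop on $\sstar{\eta(\sigma_2(j))}$, which the algorithm never breaks, so $j$ is in fact touched at most once.

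For the remaining case, write $i' = \sigma_2(j)$ and $i = \eta(i') = \sigmaprimeone(j)$, and note $\sigma_1(j) \in \facilityset_1$. By the triangle inequality $\dist{j}{i} \le \dist{j}{i'} + \dist{i'}{i}$. Since $i = \eta(i')$ is the facility of $\facilityset_1$ nearest to $i'$ and $\sigma_1(j) \in \facilityset_1$, we have $\dist{i'}{i} \le \dist{i'}{\sigma_1(j)}$, and a second triangle inequality gives $\dist{i'}{\sigma_1(j)} \le \dist{i'}{j} + \dist{j}{\sigma_1(j)}$. Combining these yields $\dist{j}{\sigmaprimeone(j)} \le \dist{j}{\sigma_2(j)} + \dist{j}{\sigma_2(j)} + \dist{j}{\sigma_1(j)} = \dist{j}{\sigma_1(j)} + 2\dist{j}{\sigma_2(j)}$, as required.

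The arithmetic here is routine; the only real obstacle is the bookkeeping of the second paragraph, namely checking that the reassignment target is always the fixed facility $\eta(\sigma_2(j))$ and that a client's service cost therefore does not blow up when it is caught in several iterations of the cycle-breaking loop. Pinning the target to $\eta(\sigma_2(j))$ is exactly what holds the factor at $2$ rather than letting it grow with the number of cycles broken.
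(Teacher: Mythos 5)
Your proof is correct and follows essentially the same route as the paper's: the identical chain of two triangle inequalities combined with the fact that $\eta(\sigma_2(j))$ is the facility of $\facilityset_1$ nearest to $\sigma_2(j)$ while $\sigma_1(j)\in\facilityset_1$. Your additional bookkeeping --- verifying that every reassignment of $j$ in Algorithm~\ref{BreakingCycle} targets the same fixed facility $\eta(\sigma_2(j))$, so that $\sigmaprimeone(j)\in\{\sigma_1(j),\eta(\sigma_2(j))\}$ and the bound does not degrade over repeated cycle-breaking --- is a point the paper's one-line justification ``because $\eta(\sigma_2(j))=\sigmaprimeone(j)$'' leaves implicit, and it is a worthwhile clarification rather than a departure.
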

\begin{proof}
Let $j \in \clientset$. We have $\dist{j}{\sigmaprimeone(j)} \leq  \dist{j}{\sigma_2(j)} + \dist{\sigma_2(j)}{\sigmaprimeone(j)} \leq  \dist{j}{\sigma_2(j)} + \dist{\sigma_2(j)}{\sigma_1(j)} \leq  \dist{j}{\sigma_2(j)} +  (\dist{\sigma_2(j)}{j} + \dist{j}{\sigma_1(j)}) = \dist{j}{\sigma_1(j)} + 2\dist{j}{\sigma_2(j)}$, where the second inequality follows because $\eta(\sigma_2(j)) = \sigmaprimeone(j)$. See Figure \ref{CycleCost}.
\end{proof}

\begin{figure}[]
		\begin{center}
	    \begin{tabular}{c}
		\includegraphics[width=100mm]{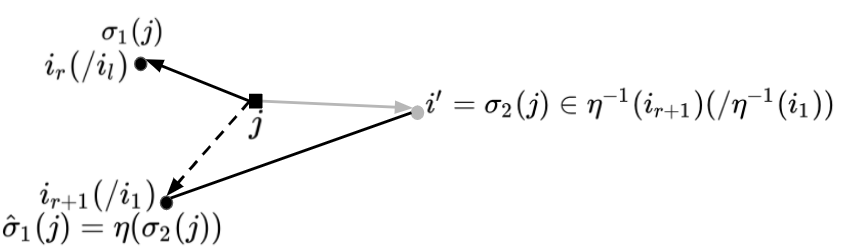}
	    \end{tabular}
		\end{center}
		\caption{ $\dist{j}{\sigmaprimeone(j)} \leq \dist{j}{\sigma_1(j)} + 2\dist{j}{\sigma_2(j)}$. }
	\label{CycleCost}
\end{figure}

Now that we have an {\em almost-DAG} on the stars, we process the stars in the sequence $<S_{i_1}, S_{i_2}, \ldots S_{i_t}>$ defined by a topological ordering of the vertices in $G_2 (\sigmaprimeone, \sigma_2)$ (ignoring the self-loops). While processing the stars, we maintain partition of our clients into two sets $\clientset_s$ and $\clientset_u$ of {\em settled} and {\em unsettled} clients respectively. We say that a client is {\em settled} if it has been assigned to an open facility in $S_I$ and {\em unsettled} otherwise. Initially $\settled = \emptyset$ and $\unsettled = \clientset$. As we process the stars, more and more clients get settled.

Consider star $\sstar{i}$. Algorithm \ref{TreeProcessing}
gives the  processing of $\sstar{i}$ in detail. For $i' \in \etainv{i}$, let $N_{i'}$ be the set of unsettled clients, assigned to $i'$ in $AS_2$. We open some facilities in $\etainv{i}$ and assign these clients to them. Consider the facilities in $\etainv{i}$ in decreasing order of distance from $i$, i.e., $y_1,y_2,...,y_l$. 
 We collect the unsettled clients assigned to them, by $AS_2$, in a bag looking for a facility $t$ at which we have collected at least $\B_t$ clients. We open such a facility and empty the bag by assigning all the clients in the bag to the facility just opened ({\em Type-I assignment}) and start the process again with the next facility in the order (lines $10 - 16$). 
 To avoid the problem mentioned earlier and exhibited in Figure $\ref{counter}$, 
 before we do this,
 we reserve some clients in $\sigmaprimeoneinv{i}$
 so that the total number of unsettled clients assigned (in $AS_2$) to the last facility $y_l$ in the order, along with the reserved clients is at least $\B_i$. For this, we reserve $\max\{0, \B_i - |N_{y_l}|\}$ clients from $\sigmaprimeoneinv{i}\setminus N_{y_l}$  at $i$ (line $7$). 
 Note that 
 $\sigmaprimeoneinv{i}$ and hence $\sigmaprimeoneinv{i}\setminus N_{y_l}$ may contain clients from $\cup_{i' \in \etainv{i}} N_{i'}$. Thus, we delete the reserved clients from
 $N_{i'},~i' \in \etainv{i}$ (lines $8 - 9$) before processing the facilities in $\etainv{i}$. 
 
 To make sure that we do not open more than $|\etainv{i}|$ facilities in $\sstar{i}$, we open only one of $i$ and $y_l$ for the remaining $(Bag \cup N_{y_{l}} \cup \res{i})$ clients. This also ensures that we do not open $i$ more than once. We open $i$ and give all the remaining clients 
 to $i$ because (as we will show later) the cost of assigning clients from $Bag \cup N_{y_{l}}$ to $i$ is bounded whereas we do not know how to bound the cost of assigning clients in $\res{i}$ to $y_l$.
 Clearly, we open no more than $k$ facilities in all.

\begin{claim}
At line $7$, we have sufficient clients in $\sigmaprimeoneinv{i} \setminus N_{y_l}$ i.e., $|\sigmaprimeoneinv{i} \setminus N_{y_l}| \ge \B_i - |N_{y_l}|$.
\end{claim}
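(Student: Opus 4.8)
The plan is to reduce the claim to property (ii) of the dependency graph $G_2(\sigmaprimeone, \sigma_2)$, which guarantees that every center $i \in \facilityset_1$ still satisfies $|\sigmaprimeoneinv{i}| \ge \B_i$ after the cycle-breaking step. Since the reservation at line $7$ only asks for $\max\{0, \B_i - |N_{y_l}|\}$ clients, the case $\B_i \le |N_{y_l}|$ is vacuous, so I would assume $\B_i > |N_{y_l}|$ and show that $\sigmaprimeoneinv{i} \setminus N_{y_l}$ contains at least $\B_i - |N_{y_l}|$ clients.

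First I would write the set difference as $|\sigmaprimeoneinv{i} \setminus N_{y_l}| = |\sigmaprimeoneinv{i}| - |\sigmaprimeoneinv{i} \cap N_{y_l}|$ and bound the intersection trivially by $|\sigmaprimeoneinv{i} \cap N_{y_l}| \le |N_{y_l}|$, since $\sigmaprimeoneinv{i} \cap N_{y_l}$ is a subset of $N_{y_l}$. Combining this with property (ii) gives $|\sigmaprimeoneinv{i} \setminus N_{y_l}| \ge |\sigmaprimeoneinv{i}| - |N_{y_l}| \ge \B_i - |N_{y_l}|$, which is exactly the desired inequality. This calculation is routine and forms the bulk of the formal proof.

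The subtle point --- and the step I expect to need the most care --- is justifying that the clients counted in $\sigmaprimeoneinv{i}$ are genuinely available (unsettled) at the moment $\sstar{i}$ is processed, so that property (ii) can be applied to the \emph{current} state rather than to an already-depleted set. Here I would invoke the almost-DAG structure together with the topological ordering: any client $j \in \sigmaprimeoneinv{i}$ that is served in $AS_2$ by a facility of some star $\sstar{i'}$ induces the edge $(\sstar{i}, \sstar{i'})$ in $G_2(\sigmaprimeone, \sigma_2)$, so by the topological order such $\sstar{i'}$ is either $\sstar{i}$ itself (a self-loop, not yet processed) or a star processed strictly later; hence no client of $\sigmaprimeoneinv{i}$ can have been settled while processing an earlier star. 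Moreover, because $\sigmaprimeone$ is a function, the sets $\sigmaprimeoneinv{\cdot}$ are pairwise disjoint across centers, so the reservations performed at earlier stars (each drawn from that star's own preimage) never remove any client of $\sigmaprimeoneinv{i}$. Thus $\sigmaprimeoneinv{i}$ is intact when line $7$ is reached, and the counting argument above is legitimate.
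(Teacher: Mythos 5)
Your proof is correct and follows essentially the same route as the paper: both reduce the claim to the invariant $|\sigmaprimeoneinv{i}| \ge \B_i$ preserved by the cycle-breaking step, and both justify that $\sigmaprimeoneinv{i}$ is still fully unsettled when $\sstar{i}$ is reached by combining the almost-DAG property (no edge from $\sstar{i}$ to an earlier-processed star, so no client of $\sigmaprimeoneinv{i}$ lies in $\cup_{i' \in \eta^{-1}(\bar i)}\sigma_2^{-1}(i')$ for any earlier $\sstar{\bar i}$) with the pairwise disjointness of the sets $\sigmaprimeoneinv{\cdot}$. The explicit set-difference counting you add at the start is a minor elaboration of what the paper leaves implicit; there is no substantive difference.
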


\begin{proof}
Note that the set of clients that get settled while processing star $\sstar{i}$ is a subset of $\sigmaprimeoneinv{i} \cup (\cup_{i' \in \eta^{-1}(i)} \sigmatwoinv{i'})$.
  To prove the claim, we only need to prove that $\sigmaprimeoneinv{i} \cap \settled = \emptyset$. The claim then follows by using the fact that
$|\sigmaprimeoneinv{i}| \ge \B_i$.
Let $\sstar{\bar{i}}$  be a star that was processed before the star $\sstar{i}, \bar i \ne i$. Then since $G_2(\sigmaprimeone, \sigma_2)$ is an {\em almost-DAG}, there is no edge from $\sstar{i}$ to $\sstar{\bar i}$, i.e., $|\sigmaprimeoneinv{i} \cap (\cup_{i' \in \eta^{-1}(\bar i)}\sigma_2^{-1}(i'))| = 0$. Obviously, $|\sigmaprimeoneinv{i} \cap \sigmaprimeoneinv{\bar{i}} | = 0$. Thus, the set of clients that got settled while processing $\sstar{\bar{i}}$ has no intersection with $\sigmaprimeoneinv{i}$. Hence the claim follows.
\end{proof}

\begin{algorithm}[H] 	
    \setcounter{AlgoLine}{0}
	\SetAlgoLined
    \DontPrintSemicolon  
    \SetKwInOut{Input}{Input}
	\SetKwInOut{Output}{Output}
    \Input{$\sstar{i}: i \in \facilityset_1$}
    
    $\res{i} \gets \emptyset$\;
    
    $Bag \gets \emptyset$\;

\For {$i' \in \etainv{i}$} {$N_{i'} \gets \clientset_u \cap \sigmatwoinv{i'}$ }

	Arrange the facilities in $\eta^{-1}(i)$ in the sequence $<y_1, \ldots y_l>$ such that $\dist{y_{l'}}{i} \geq \dist{y_{l' + 1}}{i} ~\forall ~l' = 1 \ldots l - 1$\;
	
	\If{$|N_{y_l}| < \B_i$}
	{
	$\res{i} \gets$ set of any $\B_i - |N_{y_l}|$ clients from $\sigmaprimeoneinv{i} \setminus N_{y_l}$

	 \For {$i' \in \etainv{i}$}{
        $N_{i'} \gets N_{i'} \setminus \res{i} $
    
    }
	}

		 \For {$l' = 1$ to $l-1$} {  
		 $Bag \gets Bag \cup N_{y_{l'}}$\;
		 
		     \If{$|Bag| \geq \B_{y_{l'}}$}  {
		 
		        Open facility $y_{l'}$\;
	       
		       \For{$j \in Bag$}{
                    Assign $j$ to $y_{l'}$, $\settled \gets \settled \cup \{ j \}$, $\unsettled \gets \unsettled \setminus \{j \}$\;
                }
                
                $Bag \gets \emptyset$\;
		    }
		    }
		    
		    
		    $t\gets i$\;
		    
		    If $|Bag \cup N_{y_{l}} \cup \res{i}| > (\beta+1)\capacityU_i$ then $t \gets y_l$

		    Open $t$\;
		    
		    \For{$j \in Bag \cup N_{y_{l}} \cup \res{i}$}{ 
            
            Assign $j$ to $t$, $\settled \gets \settled \cup \{ j \}$, $\unsettled \gets \unsettled \setminus \{j \}$\;
            }
            
	\caption{Process($\sstar{i}$)}
	\label{TreeProcessing}
\end{algorithm}

Note that when one of the bounds is uniform, we have $\capacityU_i \ge \B_{i'}$ for all $i$ and $i'$.
Clearly, the facilities opened by the above algorithm in line $13$ ({\em Type-I} assignment) satisfy the lower bounds. 
Let the assignment of clients to facility $i$ when $t=i$ in lines $20$ - $21$ be called as {\em Type-II assignments} and those to facility $y_l$ when $t=y_l$ be called as {\em Type-III assignments}.
In {\em Type-II} assignments, the star center $i$ satisfies the lower bound (if opened at at line $19$) as $|Bag \cup N_{y_l} \cup \res{i}| \geq \B_i$ where the inequality follows because $|\res{i}| = \max\{0, \B_i -  N_{y_l}\}$. In {\em Type-III} assignments, facility $y_l$ (if opened at line $19$) also satisfies the lower bound as $|Bag \cup N_{y_l} \cup \res{i}| > (\beta+1) \capacityU_i 
\geq 2 \B_{y_l}$  when one of the bounds is uniform.

We next bound the violations in the upper bound.
Consider the facilities in $\eta^{-1}(i)$. These facilities receive clients only in {\em Type-I} assignments  (lines $14-15$). Note that for $l' = 2,...,l-1$, we have $|Bag| < \B_{y_{l'-1}}$ just before line $11$ and hence $|Bag| < \B_{y_{l' - 1}} +  \beta \capacityU_{y_{l'}}$ (just after line $11$) $\leq (1+\beta)\capacityU_{y_{l'}}$  when one of the bounds is uniform. For $l'=1$, $|Bag| = 0$ just before line $11$ and hence $|Bag| \leq \beta \capacityU_{y_1}$ (just after line $11$). 
For {\em Type-II} assignments, the bound holds trivially because the center $i$ receives clients only when $|Bag| + |N_{y_l}| + |\res{i}| \leq (\beta + 1)\capacityU_i$. 
The maximum number  of clients received by facility $y_{l}$ in {\em Type-III} assignments is no more than $|Bag| + |N_{y_l}| + |\res{i}| = |Bag| + |N_{y_l}| + max\{0, \B_i - N_{y_l}\} =  |Bag| + max\{\B_i, N_{y_l}\} \le \B_{y_{l-1}} + \beta \capacityU_{y_{l}} \le (\beta+1) \capacityU_{y_{l}}$ when either of the bounds is uniform. 


Next, we bound the service cost. 
Consider a star $\sstar{i}$.
Consider a client $j$ assigned to a  facility $i_2 \in \etainv{i}$ in {\em Type-I} assignments  where  
$j$ was assigned to $i_1\in \etainv{i}$ in $AS_2$
i.e., $i_1 = \sigma_2(j)$ and $i_2 = \sigma_I(j)$.  See Figure \ref{TypeI/II}-(a). The cost paid by $j$ is $\dist{i_2}{j} \leq \dist{i_1}{j} + \dist{i_1}{i} + \dist{i}{i_2} \leq \dist{i_1}{j} + 2\dist{i_1}{i} \leq \dist{i_1}{j} + 2\dist{i_1}{\sigmaprimeone(j)} \leq \dist{i_1}{j} + 2(\dist{i_1}{j} + \dist{j}{\sigmaprimeone(j)}) = 3\dist{i_1}{j} + 2\dist{j}{\sigmaprimeone(j)} = 3\dist{j}{\sigma_2(j)} + 2\dist{j}{\sigmaprimeone(j)} \leq 2\dist{j}{\sigma_1(j)} + 7\dist{j}{\sigma_2(j)}$ where the third inequality follows because $\eta(i_1) = i$ and and last follows by Lemma \ref{ASPcost}. Next, consider {\em Type-II} assignments. 
Let $j \in \res{i}$ be assigned to $i$. Then since $\sigmaprimeone(j) = i$, the service cost $\dist{i}{j} (= \dist{\sigmaprimeone(j)}{j})$ is bounded by $\dist{j}{\sigma_1(j)} + 2\dist{j}{\sigma_2(j)}$ by Lemma~\ref{ASPcost}.
Next, let $j \in N_{i'}:~i' \in \etainv{i}$ be a client assigned  to $i$ i.e. $i' = \sigma_2(j)$ and $i = \sigma_I(j)$.
See Figure \ref{TypeI/II}-(b). Then, the service cost $\dist{i}{j}$ is bounded by
$\dist{\sigma_2(j)}{j} + \dist{\sigma_2(j)}{i} = \dist{\sigma_2(j)}{j} + \dist{\sigma_2(j)}{\eta(\sigma_2(j))} \leq \dist{\sigma_2(j)}{j} + \dist{\sigma_2(j)}{\sigmaprimeone(j)} \leq \dist{\sigma_2(j)}{j} +  \dist{\sigma_2(j)}{j} + \dist{j}{\sigmaprimeone(j)} = 2\dist{j}{\sigma_2(j)} + \dist{j}{\sigmaprimeone(j)} \leq 4\dist{j}{\sigma_2(j)} + \dist{j}{\sigma_1(j)}$
where the last inequality follows by Lemma \ref{ASPcost}. 
For {\em Type-III} assignments, note that $|Bag \cup N_{y_l} \cup \res{i}| > (\beta+1)\capacityU_i \Rightarrow |\res{i}| = 0$, for otherwise $|N_{y_l} \cup \res{i}| = \B_{i}$ and thus $|Bag \cup N_{y_l} \cup \res{i}| < \B_{Prev} + \B_{i} \le 2\capacityU_i$ when either of the bounds is uniform. Hence, the cost of assigning $|Bag \cup N_{y_l}|$ clients to $y_l$ is bounded in the same manner as the cost of {\em Type-I} assignments.

\begin{figure}[]
		\begin{center}
		\includegraphics[width=115mm]{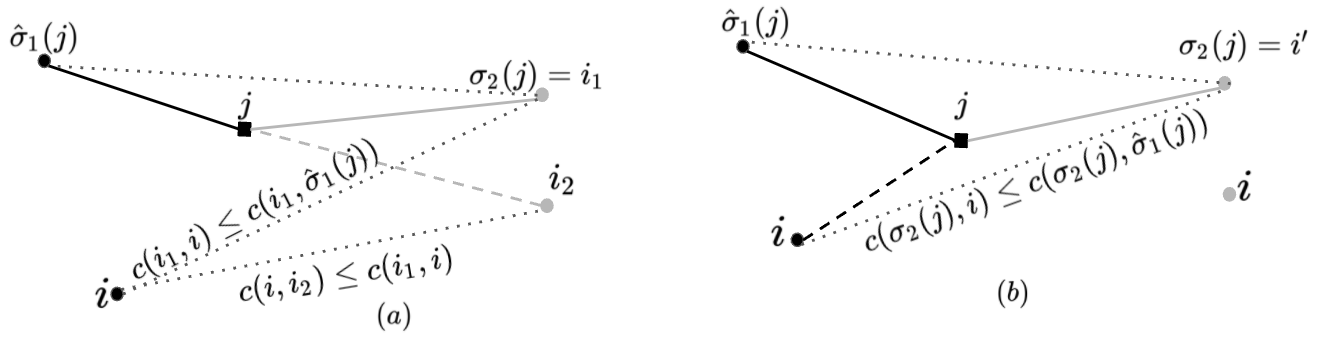}
	    \end{center}
		\caption{($a$)
		Cost bound of {\em Type-I} assignments: 
		($b$) Cost bound of {\em Type-II} assignments: 
		}
	\label{TypeI/II}
\end{figure}

Let $FCost_{\mathcal{I}}(S)$ and $SCost_{\mathcal{I}}(S)$ denote the facility opening cost and service cost of a solution $\mathcal{S}$ of instance $I$ respectively. Then, $FCost_{\mathcal{I}}(S) \le FCost_{\mathcal{I}_1}(AS_1) + FCost_{\mathcal{I}_2}(AS_2)$ and  $SCost_{\mathcal{I}}(S) \le 2 SCost_{\mathcal{I}_1}(AS_1) + 7 SCost_{\mathcal{I}_2}(AS_2)$.


\noindent {\bf Proof of Theorem~\ref{LBUBkFL1Theorem1}:} Using a $\gamma$-approximation for instance $I_1$ of LFL or LkS/LkC, as the case may be, with uniform lower bounds and an $\alpha$-approximation for instance $I_2$ of  UkM/UkFL/UFL/UkS/UkC, we get the desired claims.

\noindent {\bf Proof of Theorem~\ref{LBUBkFLTheorem}:} Using a $\gamma$-approximation for instance $I_1$ of LFL or LkS/LkC, as the case may be, and an $\alpha$-approximation for instance $I_2$ of  UkM/UkFL/UFL/UkS/UkC  with uniform upper bounds, we get the desired claims.

Algorithm \ref{Step5} summarizes Step $5$ of Algorithm \ref{LBUBkM_Algo}.

\begin{algorithm}[H] 	
	\label{Step5}
    \setcounter{AlgoLine}{0}
	\SetAlgoLined
    \DontPrintSemicolon  
    \SetKwInOut{Input}{Input}
	\SetKwInOut{Output}{Output}
    \Input{$<AS_1=\facilityset_1,\sigma_1>$, $<AS_2=\facilityset_2,\sigma_2>$}
    \Output{$AS_I$}

    Construct graph $G_1 = <\facilityset_1 \cup \facilityset_2, E>$ where $E = \{(i', \eta(i')): i' \in \facilityset_2\}$\;
    
    
    Construct an {\em almost-DAG} $G_2 (\sigmaprimeone, \sigma_2)$ from $G_1$  using algorithm \ref{BreakingCycle} 
    
    
    Obtain a topological ordering $<\sstar{i_1}, \sstar{i_2} \ldots \sstar{i_t}>$ of stars in the {\em almost-DAG} $G_2 (\sigmaprimeone, \sigma_2)$.
    
    \For{$r = 1$ to $t$}{
    Process $\sstar{i_r}$ using Algorithm \ref{TreeProcessing}\;
    }
   
	\caption{Constructing $AS_I$}
	\label{Step5}
\end{algorithm}

\section{Reducing violation in upper bounds when $2\B_t \leq \capacityU_t$} \label{2LU}

In this section, assuming $2\B_t \leq \capacityU_t~\forall t \in \facilityset$, we modify Algorithm $\ref{TreeProcessing}$ to obtain Algorithm $\ref{TreeProcessing2L}$ that 
reduces the violation in upper bounds from $(\beta + 1)$ to $(\beta + \epsilon)$ for a given $\epsilon > 0$ when one of the bounds is uniform.
We do the following modifications to the algorithm: $(i)$ on arriving at a facility, say $t$, at which $|Bag| \ge \B_t$, we open $t$ and instead of emptying the bag, we assign only $\beta \capacityU_t$ clients to $t$. Remaining clients are carried forward to the next facility in the order. $(ii)$ We keep account of the last facility (in {\em  Prev}), if any, that is not opened, and the number of clients in the bag at that instant (in {\em  $Prev_{count}$}) i.e., {\em Prev} is the facility $y_{l'}$ for which $|Bag| < \B_{y_{l'}}$ immediately after line $11$  (hence at line $18$) and {\em  $Prev_{count}$} $= |Bag|$ at that time. We open {\em Prev}  at the end, if required. This is done as follows: if $|Bag \cup \res{i} \cup N_{y_l}| \le (\beta + \epsilon) \capacityU_i(/\capacityU_{y_l})$, we are done 
(we open $i(/y_l)$ and assign all clients to it). Else, we open both {\em Prev} and $y_l$ (at line $31$) (note that
{\em Prev} $\ne y_l$ must exist in this case) and, assign the clients in $Bag \cup \res{i} \cup N_{y_l}$ to {\em Prev} and $y_l$, so that they receive at least $\B_{Prev}$ and $\B_{y_l}$ clients respectively. 
When $2 \B_t \le \capacityU_t$ and  at least one of the bounds is uniform, it is possible to do so
as $\capacityU_t \ge \B_r + \B_s$ for all $r, s$ and $t$.
We will show that the violation in upper bound and the service cost are bounded in this case. 

\begin{algorithm}[H] 
    \setcounter{AlgoLine}{0}
	\SetAlgoLined
    \DontPrintSemicolon  
    \SetKwInOut{Input}{Input} 
	\SetKwInOut{Output}{Output}
    \Input{$\sstar{i}: i \in \facilityset_1$}
    
    $\res{i} \gets \emptyset$, $Bag \gets \phi$\;

\For {$i' \in \etainv{i}$} {$N_{i'} \gets \clientset_u \cap \sigmatwoinv{i'}$ }

	Arrange the facilities in $\eta^{-1}(i)$ in the sequence $<y_1, \ldots y_l>$ such that $\dist{y_{l'}}{i} \geq \dist{y_{l' + 1}}{i} ~\forall ~l' = 1 \ldots l - 1$\;
	
	\If{$|N_{y_l}| < \B_i$}
	{
	$\res{i} \gets$ set of any $\B_i - |N_{y_l}|$ clients from $\sigmaprimeoneinv{i} \setminus N_{y_l}$

	 \For {$i' \in \etainv{i}$}{
        $N_{i'} \gets N_{i'} \setminus \res{i} $
    
    }
	}
        
        $Prev \gets null$, 
        {$Prev_{count} =0$}\;
        
		 \For {$l' = 1$ to $l-1$} {  
		 $Bag \gets Bag \cup N_{y_{l'}}$\;
		 
		     \eIf{$|Bag| \geq \B_{y_{l'}}$}  {
		 
		        Open facility $y_{l'}$\;
		        
	            $Count \gets 0$\;
	            
		       \For{$j \in Bag$}{
		            \If{$Count\leq \beta\capacityU_{y_{l'}}$}
		            {
		            
                    Assign $j$ to $y_{l'}$, $\settled \gets \settled \cup \{ j \}$, $\unsettled \gets \unsettled \setminus \{j \}$, $Bag \gets Bag \setminus \{ j\}$, $Count++$,\;
                }
                }
		    }
		    {
		    $Prev \gets y_{l'}$\;  \tcp{$Prev$ denotes the last unopened facility in $\etainv{i}$}
		    
		    $Prev_{count} = |Bag|$\;
		    }
		    }
		    
		    \If{$|Bag \cup N_{y_{l}} \cup \res{i}| \leq (\beta+\epsilon) \capacityU_i$}{
		    
		    Open $i$\;
		    
		    \For{$j \in Bag \cup N_{y_{l}} \cup \res{i}$}{
		    
		    Assign $j$ to $i$, $\settled \gets \settled \cup \{ j \}$, $\unsettled \gets \unsettled \setminus \{j \}$\;
		    }
		    
		    return \;
		    }
		    \If{$|Bag \cup N_{y_{l}} \cup \res{i}| \leq (\beta+\epsilon) \capacityU_{y_l}$}{
		    
		   Open $y_l$\;
		   
		   \For{$j \in Bag \cup N_{y_{l}} \cup \res{i}$}{
		    
		    Assign $j$ to $y_l$, $\settled \gets \settled \cup \{ j \}$, $\unsettled \gets \unsettled \setminus \{j \}$\;
		    }
		    
		    return \;
		    }

		    Open $Prev$ and $y_l$\; \tcp{$\res{i} =0$ when $|Bag \cup N_{y_{l}} \cup \res{i}| > (\beta+\epsilon) \capacityU_{y_l}(/\capacityU_i)$}
		    
		    $Count \gets 0$\;
		    
		    $Bag \gets Bag \cup N_{y_{l}} \cup \res{i}$\;
		    
		   \For{$j \in Bag$}{
		  
            \eIf{$Count \leq \B_{Prev}$}{
                
                Assign $j$ to $Prev$, $\settled \gets \settled \cup \{ j \}$, $\unsettled \gets \unsettled \setminus \{j \}$, $Bag \gets Bag \setminus \{ j \}$, $Count++$\;
                    }
            {
            Assign all remaining clients in $Bag$ to $y_l$ and Break 
	            }
	       }
       
	\caption{Process($\sstar{i}$) when $2\B_t \leq \capacityU_t~\forall t \in \facilityset$}
	\label{TreeProcessing2L}
\end{algorithm}


Let the assignment of clients to facility {\em Prev} in line $36$ be called as {\em Type-IV assignments}. 
The assignments in line $17$, line $24$ and lines $29~\&~38$ are {\em Type-I}, {\em Type-II} and {\em Type-III} assignments respectively. 
Before we proceed to prove our claims, note that  we open at most one of $y_l$ and $i$: if $i$ is opened at line $22$, we return at line $25$ and thus $y_l$ is never opened in this case.  As before, this ensures that $i$ is not opened more than once.

Clearly, lower bound is satisfied by the  {\em Type-I} and {\em Type-IV} assignments done in line $17$ and $36$ for the facilities opened in lines $13$ and $31$ respectively.
Also, since $|Bag \cup N_{y_l} \cup \res{i}| \geq \B_i$, 
lower bound is satisfied by the  {\em Type-II} assignments done in line $24$ for the facility $i$ opened in line $22$. At line $29$,  $|Bag \cup N_{y_l} \cup \res{i}| > (\beta+\epsilon) \capacityU_i \geq (\beta+\epsilon) \B_{y_l}$ when either of the bounds is uniform,
Clearly, the upper bound is violated at most by a factor of $(\beta + \epsilon)$ at all the facilities opened in lines $17$, $24$, $29$ and $36$.
To bound the assignments done in line $38$, we look at the status at line $31$.
At line $31$, $|\res{i}| = 0$, for otherwise $|N_{y_l} \cup \res{i}| = \B_{i}$, hence $|Bag \cup \res{i} \cup N_{y_l}| < \B_{Prev} + \B_{i} \le \capacityU_i$ (when one of the bounds is uniform). Thus, $|Bag \cup N_{y_l} \cup \res{i}| = |Bag \cup N_{y_l}| < \B_{Prev} + \beta \capacityU_{y_l}$. Also, $|Bag \cup N_{y_l} \cup \res{i}|  >  (\beta + \epsilon) \capacityU_{y_l} > \B_{Prev} + \B_{y_l}$ (when one of the bounds is uniform). Thus, $\B_{y_l} < |Bag \cup N_{y_l} \cup \res{i}| - \B_{Prev} < \beta \capacityU_{y_l}$ i.e., at line $38$, $\B_{y_l} < |Bag| < \beta \capacityU_{y_l}$.



Costs of {\em Type-I} assignments, {\em Type-II} assignments and {\em Type-III} assignments are bounded in the same manner as 
in Section~\ref{bicriteriakM}.
To bound the service cost of {\em Type-IV} assignments (line $36$),
observe that $|Bag \cup \res{i} \cup N_{y_l}|  {> (\beta + \epsilon) \capacityU_{y_l}}$ implies that $|Bag| > \epsilon  \capacityU_{y_l}$  as $|\res{i}| = 0$ and $N_{y_l} \le \beta \capacityU_{y_l}$; hence ${Prev_{count}} \ge |Bag| ($at line$ ~20 )> \epsilon  \capacityU_{y_l} >  \epsilon \B_{Prev}$ (last inequality holds when at least one of the bounds is uniform).
Note that $Prev$ and $Prev_{count}$ do not change after exiting the for-loop at line $20$. Thus,  $Prev_{count} > \epsilon \B_{Prev}$ after line $31$ also. 
Thus, the cost of assigning at most $\B_{Prev}$ clients from $N_{y_l}$ to {\em Prev} is bounded by $(1/\epsilon)$ times the cost of assigning $\epsilon \B_{Prev}$ clients from $\cup_{ \text{$i'$ occurs before {\em Prev} in the order}} N_{i'} \cup N_{{Prev}}$ to  $y_l$. Hence, the total cost of {\em Type-IV} assignments is bounded by $(1/\epsilon)$ total cost of {\em Type-III} assignments). 

Theorem \ref{FinalResultLBUBkFL2L11} and  \ref{FinalResultLBUBkFL2L1} then follow in the same manner as Theorem \ref{LBUBkFL1Theorem1} and  \ref{LBUBkFLTheorem} respectively.

\section{Conclusion and Future Work}
\label{cncl}

 In this paper, we presented first constant factor approximations for lower and upper bounded $k$-median and $k$-facility location problems violating the upper bounds by a factor of $\beta+1$ where $\beta$ is the violation in upper bounds in the solutions of the underlying problems with upper bounds.  We studied the problems when one of the bounds is uniform. Any improvement in $\beta$ in future will lead to improved results for our problems as well. Our approach also gives a constant factor approximation for lower and upper bounded facility location; we improve upon the upper bound violation of $5/2$ obtained by Gupta \etal~\cite{GroverGD21_LBUBFL_Cocoon} to $2$. We also presented first constant factor approximations for lower and  uniform upper bounded $k$-center and its generalization, $k$-supplier problem. 

 
For the special case when $2\B_i \le \capacityU_i~\forall i \in \facilityset$, an improvement in upper bound violation to $\beta+\epsilon$ was also obtained  for a given $\epsilon>0$.

 \noindent {\bf Additional results: } Our 
framework also provides a polynomial time algorithm that approximates LUkFL within a constant factor violating the upper bounds by a factor of $(2 + \epsilon)$ and cardinality by a factor of $\frac{2}{1 + \epsilon}$ for a given $\epsilon > 0$  using approximation of Grover \etal~\cite{GroverGKP18} that violates capacities by $(1 + \epsilon)$ factor and cardinality by $(\frac{2}{1 + \epsilon})$. We also get a result that violates the upper bound by $2$ factor and cardinality by a factor of $2$ when the facility costs are uniform by using Aardal \etal~\cite{capkmGijswijtL2013}. 

\noindent \textbf{Future Work:} One direction for future work would be to obtain true constant factor approximations for the problems. For kM and kFL, this would be challenging as no true approximations are known for the upper bounded variants of the underlying problems. Can we obtain a true approximation for LUFL?
For LUkS/LUkC, true approximations are known for uniform lower bounds and  general upper bounds. However, to obtain a true approximation with general lower bounds and uniform upper bounds is open for the problem.

\bibliography{ref_master_CFL}
\end{document}